\tikzstyle{env}=[copoint,regular polygon rotate=0,minimum width=0.2cm, fill=black]
\tikzstyle{every picture}=[baseline=-0.25em]
\tikzstyle{dotpic}=[scale=0.5]
\tikzstyle{diredges}=[every to/.style={diredge}]
\tikzstyle{dot graph}=[shorten <=-0.1mm,shorten >=-0.1mm,scale=0.6]
\tikzstyle{plot point}=[circle,fill=black,minimum width=2mm,inner sep=0]
\tikzstyle{braceedge}=[decorate,decoration={brace,amplitude=2mm,raise=-1mm}]
\tikzstyle{small braceedge}=[decorate,decoration={brace,amplitude=1mm,raise=-1mm}]
\tikzstyle{left hook arrow}=[left hook-latex]
\tikzstyle{right hook arrow}=[right hook-latex]
\tikzstyle{black dot}=[inner sep=0.7mm,minimum width=0pt,minimum height=0pt,fill=black,draw=black,shape=circle]
\tikzstyle{dot}=[black dot]
\tikzstyle{smalldot}=[inner sep=0.4mm,minimum width=0pt,minimum height=0pt,fill=black,draw=black,shape=circle]%NEW
\tikzstyle{white dot}=[dot,fill=white]
\tikzstyle{antipode}=[white dot,inner sep=0.3mm,font=\footnotesize]
\tikzstyle{smallwhitedot}=[smalldot,fill=white]%NEW
\tikzstyle{alt white dot}=[white dot,label={[xshift=3.07mm,yshift=-0.05mm,font=\footnotesize]left:$*$}]
\tikzstyle{gray dot}=[dot,fill=gray!40!white]
\tikzstyle{smallgraydot}=[smalldot,fill=gray!40!white]%NEW
\tikzstyle{box vertex}=[draw=black,rectangle]
\tikzstyle{small box}=[box vertex,fill=white]%% added rwd]
\tikzstyle{whitebg}=[fill=white,inner sep=2pt]
\tikzstyle{graph state vertex}=[sg vertex,fill=black]
\tikzstyle{wide copoint}=[fill=white,draw=black,shape=isosceles triangle,shape border rotate=90,isosceles triangle stretches=true,inner sep=1pt,minimum width=1.5cm,minimum height=5mm]
\tikzstyle{wide point}=[fill=white,draw=black,shape=isosceles triangle,shape border rotate=-90,isosceles triangle stretches=true,inner sep=1pt,minimum width=1.5cm,minimum height=4mm]
\tikzstyle{very wide copoint}=[fill=white,draw=black,shape=isosceles triangle,shape border rotate=-90,isosceles triangle stretches=true,inner sep=1pt,minimum width=2.5cm,minimum height=4mm]
\tikzstyle{very wide empty copoint}=[draw=black,shape=isosceles triangle,shape border rotate=-90,isosceles triangle stretches=true,inner sep=1pt,minimum width=2.5cm,minimum height=4mm]
\tikzstyle{symm}=[ultra thick,shorten <=-1mm,shorten >=-1mm]
\tikzstyle{square box}=[rectangle,fill=white,draw=black,minimum height=5mm,minimum width=5mm,font=\small]
\tikzstyle{square gray box}=[rectangle,fill=gray!30,draw=black,minimum height=6mm,minimum width=6mm]
\tikzstyle{copoint}=[regular polygon,regular polygon sides=3,draw=black,scale=0.75,inner sep=-0.5pt,minimum width=7mm,fill=white]
\tikzstyle{point}=[regular polygon,regular polygon sides=3,draw=black,scale=0.75,inner sep=-0.5pt,minimum width=7mm,fill=white,regular polygon rotate=180]
\tikzstyle{gray point}=[point,fill=gray!40!white]
\tikzstyle{gray copoint}=[copoint,fill=gray!40!white]
\newcommand{\edgearrow}{{\arrow[black]{>}}}
\newcommand{\edgetick}{{\arrow[black,scale=0.7,very thick]{|}}}
\tikzstyle{diredge}=[->]
\tikzstyle{rdiredge}=[<-]
\tikzstyle{medium diredge}=[->]
\tikzstyle{short diredge}=[->]
\tikzstyle{halfedge}=[-)]
\tikzstyle{other halfedge}=[(-]
\tikzstyle{freeedge}=[(-)]
\tikzstyle{white edge}=[line width=5pt,white]
\tikzstyle{tick}=[postaction=decorate,decoration={markings, mark=at position 0.5 with \edgetick}]
\tikzstyle{small map edge}=[|-latex, gray!60!blue, shorten <=0.9mm, shorten >=0.5mm]
\tikzstyle{thick dashed edge}=[very thick,dashed,gray!40]
\tikzstyle{map edge}=[|-latex,very thick, gray!40, shorten <=1mm, shorten >=0.5mm]
\tikzstyle{tickedge}=[postaction=decorate,
\tikzstyle{dirtickedge}=[postaction=decorate,
\tikzstyle{dirdoubletickedge}=[postaction=decorate,
\newcommand{\boxshape}[3]{%
\pgfdeclareshape{#1}{
\inheritsavedanchors[from=rectangle] % this is nearly a rectangle
\inheritanchorborder[from=rectangle]
\inheritanchor[from=rectangle]{center}
\inheritanchor[from=rectangle]{north}
\inheritanchor[from=rectangle]{south}
\inheritanchor[from=rectangle]{west}
\inheritanchor[from=rectangle]{east}
% ... and possibly more
\backgroundpath{% this is new
% store lower right in xa/ya and upper right in xb/yb
\southwest \pgf@xa=\pgf@x \pgf@ya=\pgf@y
\northeast \pgf@xb=\pgf@x \pgf@yb=\pgf@y

\@tempdima=#2
\@tempdimb=#3

\pgfpathmoveto{\pgfpoint{\pgf@xa - 5pt + \@tempdima}{\pgf@ya}}
\pgfpathlineto{\pgfpoint{\pgf@xa - 5pt - \@tempdima}{\pgf@yb}}
\pgfpathlineto{\pgfpoint{\pgf@xb + 5pt + \@tempdimb}{\pgf@yb}}
\pgfpathlineto{\pgfpoint{\pgf@xb + 5pt - \@tempdimb}{\pgf@ya}}
\pgfpathlineto{\pgfpoint{\pgf@xa - 5pt + \@tempdima}{\pgf@ya}}
\pgfpathclose
}
}}
\tikzstyle{map}=[draw,shape=NEbox,inner sep=7pt]
\tikzstyle{mapdag}=[draw,shape=SEbox,inner sep=7pt]
\tikzstyle{maptrans}=[draw,shape=SWbox,inner sep=7pt]
\tikzstyle{mapconj}=[draw,shape=NWbox,inner sep=7pt]
\tikzstyle{probs}=[shape=semicircle,fill=gray!40!white,draw=black,shape border rotate=180,minimum width=1.2cm]
\tikzstyle{arrs}=[-latex,font=\small,auto]
\tikzstyle{arrow plain}=[arrs]
\tikzstyle{arrow dashed}=[dashed,arrs]
\tikzstyle{arrow bold}=[very thick,arrs]
\tikzstyle{arrow hide}=[draw=white!0,-]
\tikzstyle{arrow reverse}=[latex-]
\tikzstyle{cdnode}=[]
\tikzstyle{gn}=[dot,fill=green,minimum width=0.3cm,inner sep=0pt]
\tikzstyle{rn}=[dot,fill=red,inner sep=0pt,minimum width=0.3cm]
\tikzstyle{bn}=[dot,fill=blue,minimum width=0.3cm]
\tikzstyle{sgn}=[dot,fill=green,minimum width=0.2cm,inner sep=0pt]
\tikzstyle{srn}=[dot,fill=red,inner sep=0pt,minimum width=0.2cm]
\tikzstyle{sbn}=[dot,fill=blue,minimum width=0.3cm]
\tikzstyle{rc}=[dot,thick,fill=white,draw = red,minimum width=0.3cm,inner sep=0pt]
\tikzstyle{gc}=[dot,thick,fill=white,draw= green,inner sep=0pt,minimum width=0.3cm]
\tikzstyle{bc}=[dot,thick,fill=white,draw= blue,minimum width=0.3cm]
\tikzstyle{label}=[circle,fill=white,minimum width=0.3cm]
\tikzstyle{H box}=[rectangle,fill=yellow,draw=black,xscale=1,yscale=1,font=\small,inner sep=1.95pt]
\tikzstyle{clocklabel}=[dot,fill=yellow,draw=black,font=\tiny,inner sep=0.75pt]
\tikzstyle{rsn}=[circle split,draw,fill=red,font=\tiny,inner sep=0.75pt]
\tikzstyle{gsn}=[circle split,draw,fill=green,font=\tiny,inner sep=0.75pt]
\tikzstyle{bsn}=[circle split,draw,fill=blue,font=\tiny,inner sep=0.75pt]
\tikzstyle{rsc}=[circle split,thick,draw= red,draw,fill=white,font=\tiny,inner sep=0.75pt]
\tikzstyle{gsc}=[circle split,thick,draw= green,draw,fill=white,font=\tiny,inner sep=0.75pt]
\tikzstyle{bsc}=[circle split,thick,draw= blue,draw,fill=white,font=\tiny,inner sep=0.75pt]
\tikzstyle{cnot}=[fill=white,shape=circle,inner sep=-1.4pt]
\tikzstyle{wire label}=[font=\tiny, auto]
\newcommand{\ket}[1]{\ensuremath{\left|  #1 \right\rangle}}
\newcommand{\denote}[1]{% --``semantic'' brakets
\llbracket #1 \rrbracket} 
\newcommand{\denoteb}[1]{% --``semantic'' brakets
\left\llbracket #1 \right\rrbracket}
\tikzstyle{cdiag}=[matrix of math nodes, row sep=3em, column sep=3em, text height=1.5ex, text depth=0.25ex,inner sep=0.5em]
\tikzstyle{arrow above}=[transform canvas={yshift=0.5ex}]
\tikzstyle{arrow below}=[transform canvas={yshift=-0.5ex}]
\newtheorem{lemma}{Lemma}
\newtheorem{definition}{Definition}
\newtheorem{theorem}{Theorem}
\newtheorem{corollary}{Corollary}
\newtheorem{remark}{Remark}
\tikzstyle{square green box}=[rectangle,fill=green!30,draw=black]%,minimum height=0.0mm,minimum width=0mm]
\tikzstyle{square red box}=[rectangle,fill=red!30,draw=black]%,minimum height=0.0mm,minimum width=0mm]
\newcommand{\ed}{\end{document}}
\newcommand{\dH}{%
\beginpgfgraphicnamed{scalars//Had4}
\InputIfFileExists{scalars//Had4.tikz}{}{\input{./figures/scalars//Had4.tikz}}
\endpgfgraphicnamed}
\newcommand{\sdH}{%
\beginpgfgraphicnamed{scalars//Had5}
\InputIfFileExists{scalars//Had5.tikz}{}{\input{./figures/scalars//Had5.tikz}}
\endpgfgraphicnamed}
\newcommand{\dempty}{%
\beginpgfgraphicnamed{scalars//emptysquare-small}
\InputIfFileExists{scalars//emptysquare-small.tikz}{}{\input{./figures/scalars//emptysquare-small.tikz}}
\endpgfgraphicnamed}
\newcommand{\ldsigma}{%
\beginpgfgraphicnamed{scalars//swap-large}
\InputIfFileExists{scalars//swap-large.tikz}{}{\input{./figures/scalars//swap-large.tikz}}
\endpgfgraphicnamed}
\newcommand{\drcup}{\raisebox{-0.15cm}{%
\beginpgfgraphicnamed{scalars//cup}
\InputIfFileExists{scalars//cup.tikz}{}{\input{./figures/scalars//cup.tikz}}
\endpgfgraphicnamed}}
\newcommand{\drcap}{\raisebox{0.15cm}{%
\beginpgfgraphicnamed{scalars//cap}
\InputIfFileExists{scalars//cap.tikz}{}{\input{./figures/scalars//cap.tikz}}
\endpgfgraphicnamed}}
\newcommand{\dsigmaf}{%
\beginpgfgraphicnamed{scalars//swapf}
\InputIfFileExists{scalars//swapf.tikz}{}{\input{./figures/scalars//swapf.tikz}}
\endpgfgraphicnamed}
\newcommand{\dHf}{%
\beginpgfgraphicnamed{scalars//Had4f}
\begin{tikzpicture}
	\begin{pgfonlayer}{nodelayer}
		\node [style={H box}] (0) at (0.5, 0) {};
		\node [style=none] (1) at (0.5, 0.3) {};
		\node [style=none] (2) at (0.5, -0.3) {};
	\end{pgfonlayer}
	\begin{pgfonlayer}{edgelayer}
		\draw (1.center) to (0);
		\draw (2.center) to (0);
	\end{pgfonlayer}
\end{tikzpicture}}
\newcommand{\drcupf}{\raisebox{-0.15cm}{%
\beginpgfgraphicnamed{scalars//cupf}
\InputIfFileExists{scalars//cupf.tikz}{}{\input{./figures/scalars//cupf.tikz}}
\endpgfgraphicnamed}}
\newcommand{\drcapf}{\raisebox{0.15cm}{%
\beginpgfgraphicnamed{scalars//capf}
\InputIfFileExists{scalars//capf.tikz}{}{\input{./figures/scalars//capf.tikz}}
\endpgfgraphicnamed}}
\newcommand{\dgd}{\!%
\beginpgfgraphicnamed{scalars//sspiderg}
\InputIfFileExists{scalars//sspiderg.tikz}{}{\input{./figures/scalars//sspiderg.tikz}}
\endpgfgraphicnamed\!}
\newcommand{\drd}{\!%
\beginpgfgraphicnamed{scalars//sspiderr}
\InputIfFileExists{scalars//sspiderr.tikz}{}{\input{./figures/scalars//sspiderr.tikz}}
\endpgfgraphicnamed\!}
\newcommand{\droot}{\raisebox{-0.1cm}{%
\beginpgfgraphicnamed{scalars//1line}
\begin{tikzpicture}[scale=0.55]
	\begin{pgfonlayer}{nodelayer}
		\node [style=srn] (0) at (0, 0.5) {};
		\node [style=sgn] (1) at (0, -0) {};
	\end{pgfonlayer}
	\begin{pgfonlayer}{edgelayer}
		\draw (0) to (1);
	\end{pgfonlayer}
\end{tikzpicture}}
\endpgfgraphicnamed}}
\newcommand{\diroot}{\raisebox{-0.1cm}{%
\beginpgfgraphicnamed{scalars//iroot}
\begin{tikzpicture}[scale=0.55]
	\begin{pgfonlayer}{nodelayer}
		\node [style=srn] (0) at (0, 0.5) {};
		\node [style=sgn] (1) at (0, -0) {};
	\end{pgfonlayer}
	\begin{pgfonlayer}{edgelayer}
		\draw (0) to (1);
		\draw[bend right=45, looseness=1.00] (0) to (1);
		\draw[bend left=45, looseness=1.00] (0) to (1);
	\end{pgfonlayer}
\end{tikzpicture}}
\endpgfgraphicnamed}}
\newcommand{\did}{%
\beginpgfgraphicnamed{scalars//Id}
\begin{tikzpicture}
	\begin{pgfonlayer}{nodelayer}
		\node [style=none] (1) at (0.5, 0.3) {};
		\node [style=none] (2) at (0.5, -0.3) {};
		\node [style=none] (3) at (0.5, -0.5) {};
		\node [style=none] (4) at (0.5, 0.5) {};
	\end{pgfonlayer}
	\begin{pgfonlayer}{edgelayer}
		\draw (1.center) to (2.center);
	\end{pgfonlayer}
\end{tikzpicture}}
\newcommand{\didf}{%
\beginpgfgraphicnamed{scalars//Idf}
\begin{tikzpicture}
	\begin{pgfonlayer}{nodelayer}
		\node [style=none] (1) at (0.5, 0.3) {};
		\node [style=none] (2) at (0.5, -0.3) {};
		%\node [style=none] (3) at (0.5, -0.5) {};
		%\node [style=none] (4) at (0.5, 0.5) {};
	\end{pgfonlayer}
	\begin{pgfonlayer}{edgelayer}
		\draw (1.center) to (2.center);
	\end{pgfonlayer}
\end{tikzpicture}}
\newcommand{\gdzz}{%
\beginpgfgraphicnamed{scalars//RZ00alpha}
\begin{tikzpicture}
	\begin{pgfonlayer}{nodelayer}
		\node [style=gn] (0) at (0.5, 0) {\footnotesize$\alpha$};
	\end{pgfonlayer}
\end{tikzpicture}}
\newcommand{\gdzo}{%
\beginpgfgraphicnamed{scalars//RZ01alpha}
\begin{tikzpicture}
	\begin{pgfonlayer}{nodelayer}
		\node [style=gn] (0) at (0.5, 0.1) {\footnotesize$\alpha$};
		\node [style=none] (1) at (0.5, -0.2) {};
	\end{pgfonlayer}
	\begin{pgfonlayer}{edgelayer}
		\draw (1.center) to (0.center);
	\end{pgfonlayer}

\end{tikzpicture}}
\newcommand{\gdzopi}{%
\beginpgfgraphicnamed{scalars//RZ01pi}
\begin{tikzpicture}
	\begin{pgfonlayer}{nodelayer}
		\node [style=srn] (0) at (0.5, 0.1) {\footnotesize$\pi$};
		\node [style=none] (1) at (0.5, -0.2) {};
	\end{pgfonlayer}
	\begin{pgfonlayer}{edgelayer}
		\draw (1.center) to (0.center);
	\end{pgfonlayer}

\end{tikzpicture}}
\newcommand{\gdzoz}{%
\beginpgfgraphicnamed{scalars//RZ01z}
\begin{tikzpicture}
	\begin{pgfonlayer}{nodelayer}
		\node [style=srn] (0) at (0.5, 0.1) {};
		\node [style=none] (1) at (0.5, -0.2) {};
	\end{pgfonlayer}
	\begin{pgfonlayer}{edgelayer}
		\draw (1.center) to (0.center);
	\end{pgfonlayer}

\end{tikzpicture}}
\newcommand{\gdoo}{%
\beginpgfgraphicnamed{scalars//RZ11alpha}
\begin{tikzpicture}
	\begin{pgfonlayer}{nodelayer}
		\node [style=gn] (0) at (0.5, 0) {\footnotesize$\alpha$};
		\node [style=none] (1) at (0.5, -0.3) {};
		\node [style=none] (2) at (0.5, 0.3) {};
	\end{pgfonlayer}
	\begin{pgfonlayer}{edgelayer}
		\draw (1.center) to (0.center);
				\draw (2.center) to (0.center);
	\end{pgfonlayer}

\end{tikzpicture}}
\newcommand{\gdto}{%
\beginpgfgraphicnamed{scalars//RZ21alpha}
\InputIfFileExists{scalars//RZ21alpha.tikz}{}{\input{./figures/scalars//RZ21alpha.tikz}}
\endpgfgraphicnamed}
\newcommand{\gpi}{%
\beginpgfgraphicnamed{scalars//RZ00pi}
\begin{tikzpicture}
	\begin{pgfonlayer}{nodelayer}
		\node [style=gn] (0) at (0.5, 0) {\footnotesize$\pi$};
	\end{pgfonlayer}
\end{tikzpicture}}
\newcommand{\rpi}{%
\beginpgfgraphicnamed{scalars//RZ00pir}
\begin{tikzpicture}
	\begin{pgfonlayer}{nodelayer}
		\node [style=rn] (0) at (0.5, 0) {\footnotesize$\pi$};
	\end{pgfonlayer}
\end{tikzpicture}}
\newcommand{\gdot}{%
\beginpgfgraphicnamed{scalars//RZ00zero}
\begin{tikzpicture}
	\begin{pgfonlayer}{nodelayer}
		\node [style=sgn] (0) at (0.5, 0) {};
	\end{pgfonlayer}
\end{tikzpicture}}
\newcommand{\rdzz}{%
\beginpgfgraphicnamed{scalars//RX00alpha}
\begin{tikzpicture}
	\begin{pgfonlayer}{nodelayer}
		\node [style=rn] (0) at (0.5, 0) {\footnotesize$\alpha$};
	\end{pgfonlayer}
\end{tikzpicture}}
\newcommand{\rdzo}{%
\beginpgfgraphicnamed{scalars//RX01alpha}
\begin{tikzpicture}
	\begin{pgfonlayer}{nodelayer}
		\node [style=rn] (0) at (0.5, 0.1) {\footnotesize$\alpha$};
		\node [style=none] (1) at (0.5, -0.2) {};
	\end{pgfonlayer}
	\begin{pgfonlayer}{edgelayer}
		\draw (1.center) to (0.center);
	\end{pgfonlayer}

\end{tikzpicture}}
\newcommand{\rdoo}{%
\beginpgfgraphicnamed{scalars//RX11alpha}
\begin{tikzpicture}
	\begin{pgfonlayer}{nodelayer}
		\node [style=rn] (0) at (0.5, 0) {\footnotesize$\alpha$};
		\node [style=none] (1) at (0.5, -0.3) {};
		\node [style=none] (2) at (0.5, 0.3) {};
	\end{pgfonlayer}
	\begin{pgfonlayer}{edgelayer}
		\draw (1.center) to (0.center);
				\draw (2.center) to (0.center);
	\end{pgfonlayer}

\end{tikzpicture}}
\newcommand{\den}[1]{\denoteb{#1}^\sharp}%{\llparenthesis#1\rrparenthesis}
\newcommand{\dens}[1]{\denote{#1}^\sharp}%{\llparenthesis#1\rrparenthesis}
\title{Supplementarity is Necessary for Quantum~Diagram~Reasoning}%\footnote{This work was partially supported by R\'egion Lorraine.}}
\date{}
\author{Simon Perdrix\\CNRS, LORIA, Inria project team Carte, 
  Nancy, France\\\href{mailto:simon.perdrix@loria.fr}{simon.perdrix@loria.fr} \and Quanlong Wang\\ LORIA, Universit\'e de Lorraine, Nancy, France\\\href{mailto:quanlong.wang@loria.fr}{quanlong.wang@loria.fr}}
\begin{document}

\maketitle

\begin{abstract}
The ZX-calculus is a powerful diagrammatic language for quantum mechanics and quantum information processing. We prove that its $\frac \pi 4$-fragment is not complete, in other words the ZX-calculus is not complete for the so called  ``Clifford+T quantum mechanics''. The completeness of this fragment was one of the main open problems in \emph{categorical quantum mechanics}, a programme initiated by Abramsky and Coecke. The ZX-calculus was known to be incomplete for  quantum mechanics. On the other hand, its $\frac \pi 2$-fragment is known to be complete, i.e.~the ZX-calculus is complete for the so called ``stabilizer quantum mechanics''. Deciding whether its $\frac\pi 4$-fragment is complete is a crucial step in the development of the ZX-calculus since this fragment is approximately universal for quantum mechanics, contrary to  the $\frac \pi 2$-fragment. % (i.e.~any unitary evolution can be approximated with arbitrary accuracy using Clifford and T gates only).% whereas the Clifford fragment is not.

To establish our incompleteness result, we consider a fairly simple property of quantum states called supplementarity. %Supplementarity is parameterized by an angle.
We show that supplementarity can be derived in the ZX-calculus if and only if the angles involved in this equation are multiples of $\pi/2$. In particular, the impossibility to derive supplementarity for $\pi/4$ implies the incompleteness of the ZX-calculus for Clifford+T quantum mechanics. As a consequence, we propose to add the supplementarity to the set of rules of the ZX-calculus.

We also show that if a ZX-diagram involves antiphase twins, they can be merged when the ZX-calculus is augmented with the supplementarity rule. Merging antiphase twins  makes  diagrammatic reasoning much easier and provides a purely graphical meaning to the  supplementarity rule. %is a new example This is a new step towards a fully graphical characterisation og new graphical property

 \end{abstract}

%$\blackcup$

\section{Introduction}
%\tikzfig{countexample}
The ZX-calculus has been  introduced by Coecke and Duncan  \cite{CoeckeDuncan} as a graphical language for pure state qubit quantum mechanics % with post-selected measurements,
 where each diagram  can be interpreted as a linear map or a matrix in a typical way (so-called {\em standard interpretation}).
 Intuitively, a ZX-diagram is  made  of three kinds of vertices: \drd, \dgd, and \sdH, where each green or red vertex is parameterised by an angle.

Unlike the quantum circuit notation which has no transformation rules, the ZX-calculus combines the advantages of being intuitive with a built-in system of rewrite rules. These rewrite rules make the ZX-calculus into a formal system with nontrivial equalities between diagrams. As shown in \cite{CoeckeDuncan}, the ZX-calculus can be used to express any operation in pure state qubit quantum mechanics, i.e. it is {\em universal}. Furthermore, any equality derived in the ZX-calculus can also be derived in the standard  matrix mechanics, i.e. it is {\em sound}.

The converse of soundness  is {\em completeness}. Informally put, the ZX-calculus would be complete if any equality that can be derived using matrices can also be derived graphically. It has been shown in \cite{Vladimir} that the ZX-calculus is incomplete for the overall pure state qubit quantum mechanics, and there is no way on how to complete it by now. However, some fragments of the ZX-calculus are known to be complete. The $\frac \pi 2$-fragment, which corresponds to diagrams involving  angles  multiple of $\pi/2$, is complete \cite{Miriam1}. This fragment corresponds to the so called stabilizer quantum mechanics \cite{Gottesman}. The $\pi$-fragment is also complete \cite{perdrixduncan} and corresponds to real stabilizer quantum mechanics. 
 Meanwhile, the stabilizer completeness proof in \cite{Miriam1} carries over to a ZX-like
graphical calculus for Spekkens'  toy theory \cite{Miriam5}.

While it is an important and active area of research, stabilizer quantum mechanics is only a small part of all quantum mechanics. In particular stabilizer quantum mechanics is not universal, even approximately. This fragment is even efficiently simulatable  on a classical computer.
On the contrary, the $\frac \pi4$-fragment, which corresponds to the so-called ``Clifford+T quantum mechanics'' is approximately universal \cite{Boykin}: any unitary transformation can be approximated with an arbitrary precision by a diagram involving angles multiple of $\pi/4$ only. The completeness of the $\frac\pi4$-fragment is a crucial property and has even been stated as one of the major open questions in the categorical approach to quantum mechanics \cite{Miriam1,Miriam2,website}. A partial result has been proved in \cite{Miriam2}: the fragment composed of path diagrams involving angles multiple of $\pi/4$ is complete.

Our main contribution is to prove that the $\frac \pi 4$-fragment of the ZX-calculus is incomplete.  In other words, we prove that the ZX-calculus is not complete for the "Clifford+T quantum mechanics". To this end, we consider a simple equation called supplementarity. This equation is inspired by a work by Coecke and Edwards \cite{bobbill} on the structures of quantum entanglement. %Supplementarity is parameterized by an angle.
We show that supplementarity can be derived in the ZX-calculus if and only if the angles involved in this equation are multiples of $\pi/2$. In particular, the impossibility to derive this equation for $\pi/4$ implies the incompleteness of the ZX-calculus for ``Clifford+T quantum mechanics''.

We also show that in the ZX-calculus augmented with the supplementarity rule, antiphase twins can be merged. A pair of antiphase twins is a pair of vertices which have:  the same colour; the same neighbourhood; and antiphase angles (the difference between their angles is $\pi$). Merging antiphase twins makes diagrammatic reasoning much easier and provides a purely graphical meaning to the supplementarity rule.
% complete.

Notice that various slightly different notions of soundness/completeness have been used so far in the context of the ZX-calculus, depending on whether the rules of the language should strictly preserve the standard interpretation (as used in this paper), or  up to a global phase, or even up to a (non-zero) scalar. Our result of incompleteness applies to any of these variants. However, we believe that the recent attempts to treat carefully the scalars and in particular the zero scalar are valuable, that is why we consider in this paper the strict notion of soundness and completeness. It should also be noticed that the notion of completeness used in the context of the ZX-calculus is different from a related one used in \cite{Selinger} to prove that finite dimensional Hilbert spaces are complete for dagger compact closed categories. The difference lies in that the concept of completeness used in the present paper is only concerned with the standard interpretation in finite dimensional Hilbert spaces, whereas, roughly speaking, in  \cite{Selinger} it is considered for every possible interpretation (of object variables as spaces and morphism variables as linear maps).

This paper is structured as follows: the ZX-calculus (diagrams, standard interpretation, and rules) is presented in section \ref{sec:zx}. Section \ref{sec:supplementarity} is dedicated to the supplementarity equation and its graphical  interpretation in terms of antiphase twins. In section \ref{sec:incomp} we show that supplementarity involving angles which are not multiples of $\pi/2$ cannot be derived in the ZX-calculus which implies the incompleteness of the $\frac \pi 4$-fragment.

 \section{ZX-calculus}\label{sec:zx}

\subsection{Diagrams and standard interpretation}

A ZX-diagram $D:k\to l$ with $k$ inputs and $l$ outputs is generated by:

\begin{center}
\begin{tabular}{|r@{~}r@{~}c@{~}lc|r@{~}r@{~}c@{~}lc|}
\hline
&$R_Z^{(n,m)}(\alpha)$&$:$&$n\to m$ & %
\beginpgfgraphicnamed{scalars//spideralpha}
\InputIfFileExists{scalars//spideralpha.tikz}{}{\input{./figures/scalars//spideralpha.tikz}}
\endpgfgraphicnamed & &$R_X^{(n,m)}(\alpha)$&$:$&$ n\to m$& %
\beginpgfgraphicnamed{scalars//spiderredalpha}
\InputIfFileExists{scalars//spiderredalpha.tikz}{}{\input{./figures/scalars//spiderredalpha.tikz}}
\endpgfgraphicnamed \\
\hline
& $H$&$:$&$1\to 1$ &%
\beginpgfgraphicnamed{scalars//Had4}
\InputIfFileExists{scalars//Had4.tikz}{}{\input{./figures/scalars//Had4.tikz}}
\endpgfgraphicnamed  &
 & $e $&$:$&$0 \to 0$ &%
\beginpgfgraphicnamed{scalars//emptysquare-small}
\InputIfFileExists{scalars//emptysquare-small.tikz}{}{\input{./figures/scalars//emptysquare-small.tikz}}
\endpgfgraphicnamed  \\\hline
 &$\mathbb I$&$:$&$1\to 1$&%
\beginpgfgraphicnamed{scalars//Id}
}
\endpgfgraphicnamed &
  &$\sigma$&$:$&$ 2\to 2$& %
\beginpgfgraphicnamed{scalars//swap}
\InputIfFileExists{scalars//swap.tikz}{}{\input{./figures/scalars//swap.tikz}}
\endpgfgraphicnamed \\\hline
  &$\epsilon$&$:$&$2\to 0$& %
\beginpgfgraphicnamed{scalars//cup}
\InputIfFileExists{scalars//cup.tikz}{}{\input{./figures/scalars//cup.tikz}}
\endpgfgraphicnamed&
  &$\eta$&$:$&$ 0\to 2$&  %
\beginpgfgraphicnamed{scalars//cap}
\InputIfFileExists{scalars//cap.tikz}{}{\input{./figures/scalars//cap.tikz}}
\endpgfgraphicnamed \\\hline
\end{tabular}\\where $m,n\in \mathbb N$ and $\alpha \in [0,2\pi)$
\end{center}

\begin{itemize}

\item Spacial composition: for any $D_1:a\to b$ and $D_2: c\to d$, $D_1\otimes D_2 : a+c\to b+d$ consists in placing $D_1$ and $D_2$ side-by-side, $D_2$ on the right of $D_1$.
\item Sequential composition: for any $D_1:a\to b$ and $D_2: b\to c$, $D_2\circ D_1 : a\to c$ consists in placing $D_1$ on the top of $D_2$, connecting the outputs of $D_1$ to the inputs of $D_2$.

\end{itemize}

When equal to $0$ modulo $2\pi$ the angles of the green and red dots are omitted:
$$%
\beginpgfgraphicnamed{scalars//spiderg}
\InputIfFileExists{scalars//spiderg.tikz}{}{\input{./figures/scalars//spiderg.tikz}}
\endpgfgraphicnamed := %
\beginpgfgraphicnamed{scalars//spidergz}
\InputIfFileExists{scalars//spidergz.tikz}{}{\input{./figures/scalars//spidergz.tikz}}
\endpgfgraphicnamed\qquad\qquad%
\beginpgfgraphicnamed{scalars//spiderr}
\InputIfFileExists{scalars//spiderr.tikz}{}{\input{./figures/scalars//spiderr.tikz}}
\endpgfgraphicnamed := %
\beginpgfgraphicnamed{scalars//spiderrz}
\InputIfFileExists{scalars//spiderrz.tikz}{}{\input{./figures/scalars//spiderrz.tikz}}
\endpgfgraphicnamed\qquad$$

The standard interpretation of the ZX-diagrams associates with any diagram $D:n\to m$ a linear map $\denoteb{D}:\mathbb C^{2^n}\to \mathbb C^{2^m}$ inductively defined as follows:

\centerline{$
\def\arraystretch{0.5}
\denoteb{D_1\otimes D_2} := \denoteb{D_1}\otimes \denoteb{D_2}\qquad  \denoteb{D_2\circ D_1} := \denoteb{D_2}\times\denoteb{D_1}\qquad\denoteb{~\dempty~} := 1 \qquad \denoteb{~\did~}:= \left(\begin{array}{@{}c@{}c@{}}1&0\\0&1\end{array}\right)$}

\centerline{$
\def\arraystretch{0.5}
\denoteb{~\dHf~} := \frac{1}{\sqrt 2}\left(\begin{array}{@{}c@{}r@{}}1&1\\1&{~\text{-}}1\end{array}\right)\qquad\denoteb{~\dsigmaf~}:= \left(\begin{array}{@{}c@{}r@{}@{}c@{}c@{}}1&0&0&0\\0&0&1&0\\0&1&0&0\\0&0&0&1\\\end{array}\right)\qquad \denote{\drcupf}:= \left(\begin{array}{@{}c@{}c@{}@{}c@{}c@{}}1&0&0&1\end{array}\right) \qquad \denoteb{\drcapf}:= \left(\begin{array}{@{}c@{}}1\\0\\0\\1\end{array}\right) $}

\noindent $\denote{R_Z^{(0,0)}(\alpha)}:= 1{+}e^{i\alpha}$, and when $a{+}b>0$, $\denote{R_Z^{(a,b)}(\alpha)}$ is a matrix with $2^a$ columns and $2^b$ rows such that all entries are $0$ except the top left one which is $1$ and the bottom right one which is $e^{i\alpha}$, e.g.:
$$
\def\arraystretch{0.5}
\denoteb{~\gdzz~} = 1+e^{i\alpha} \qquad \denoteb{~\gdzo~} = \left(\begin{array}{@{}c@{}}1\\e^{i\alpha}\end{array}\right) \qquad \denoteb{~\gdoo~} = \left(\begin{array}{@{}c@{}c@{}}1&~0~\\0&~e^{i\alpha}\end{array}\right) \qquad  \denoteb{\gdto} = \left(\begin{array}{@{}c@{}c@{}c@{}c@{}}1~&~0~&~0~&~0~\\0~&~0~&~0~&~e^{i\alpha}\end{array}\right) $$
%\frac{1}{\sqrt 2}\left(\begin{array}{@{}c@{}r@{}}1&1\\1&{~\text{-}}1\end{array}\right)\qquad\denoteb{~\dsigma~}= \left(\begin{array}{@{}c@{}r@{}@{}c@{}c@{}}1&0&0&0\\0&0&1&0\\0&1&0&0\\0&0&0&1\\\end{array}\right)\qquad \denoteb{\drcup}= \left(\begin{array}{@{}c@{}c@{}@{}c@{}c@{}}1&0&0&1\end{array}\right) \qquad \denoteb{\drcap}= \left(\begin{array}{@{}c@{}}1\\0\\0\\1\end{array}\right) $$
\noindent For any $a,b\ge  0$, $\denote{R_X^{a,b}(\alpha)}:= \denote{H}^{\otimes b}\times \denote{R_Z^{a,b}(\alpha)} \times \denote{H}^{\otimes a}$, where $M^{\otimes 0} =1$ and for any $k>0$, $M^{\otimes k}=M\otimes M^{\otimes k-1}$. E.g.,
$$
\def\arraystretch{0.5}
\denoteb{~\rdzz~} = 1+e^{i\alpha} \qquad \denoteb{~\rdzo~}= \sqrt2 e^{i\frac \alpha 2}\!\left(\begin{array}{@{}r@{}}\cos(\nicefrac{\alpha}2)\\\text{-}i\sin(\nicefrac{\alpha}2)\end{array}\right) \qquad \denoteb{~\rdoo~} = e^{i\frac \alpha 2}\! \left(\begin{array}{@{}r@{}r@{}}\cos(\nicefrac{\alpha}2)&\text{~~-}i\sin(\nicefrac{\alpha}2)\\\text{-}i\sin(\nicefrac{\alpha}2)&\cos(\nicefrac{\alpha}2)\end{array}\right) % \qquad  \denoteb{\gdto} = \left(\begin{array}{@{}c@{}c@{}c@{}c@{}}1&0&0&0\\0&0&0&e^{i\alpha}\end{array}\right)
$$
ZX-diagrams are universal in the sense that for any $m,n\ge 0$ and any linear map $U:\mathbb C^{2^n}\to \mathbb C^{2^m}$, there exists a diagram $D:n\to m$ such that $\denote{D} = U$ \cite{CoeckeDuncan}. In particular, any unitary quantum evolution on a finite number of qubits can be represented by a ZX-diagram. Notice that universality implies to work with a uncountable set of angles. As a consequence, the approximate version of universality, i.e. the ability to approximate with arbitrary accuracy any linear map, is generally preferred in quantum information processing. % in quantum information processing the desirable property is approximate universality, i.e. the ability to approximate with arbitrary accuracy any linear map.
The $\frac \pi 4$-fragment of language, which consists of all diagrams whose angles are multiples of $\pi/4$, is approximately universal, whereas the $\frac \pi2$-fragment is not. %, corresponding to the so-called stabiliser fragment of quantum mechanics, is not approximately universal.

\subsection{Calculus}

The representation of a matrix in this graphical language is not unique. We present in this section the rules of the ZX calculus. %, summarised in Figure \ref{figure1}.
These rules are sound in the sense that if two diagrams $D_1$ and $D_2$ are equal according to the rules of the ZX calculus, denoted $ZX\vdash D_1 = D_2$, then $\denote{D_1} = \denote{D_2}$. The rules of the language are given in Figure \ref{figure1}, and detailed bellow. The colour-swapped version and upside-down version of each rule given in Figure \ref{figure1} also apply.  %Notice that the language enjoys some symmetries which may not be obvious at first sight: for any equation that can be derived in ZX, the  up-side down equation and the color flipped equation  can also be derived.

 \begin{figure}[!h]
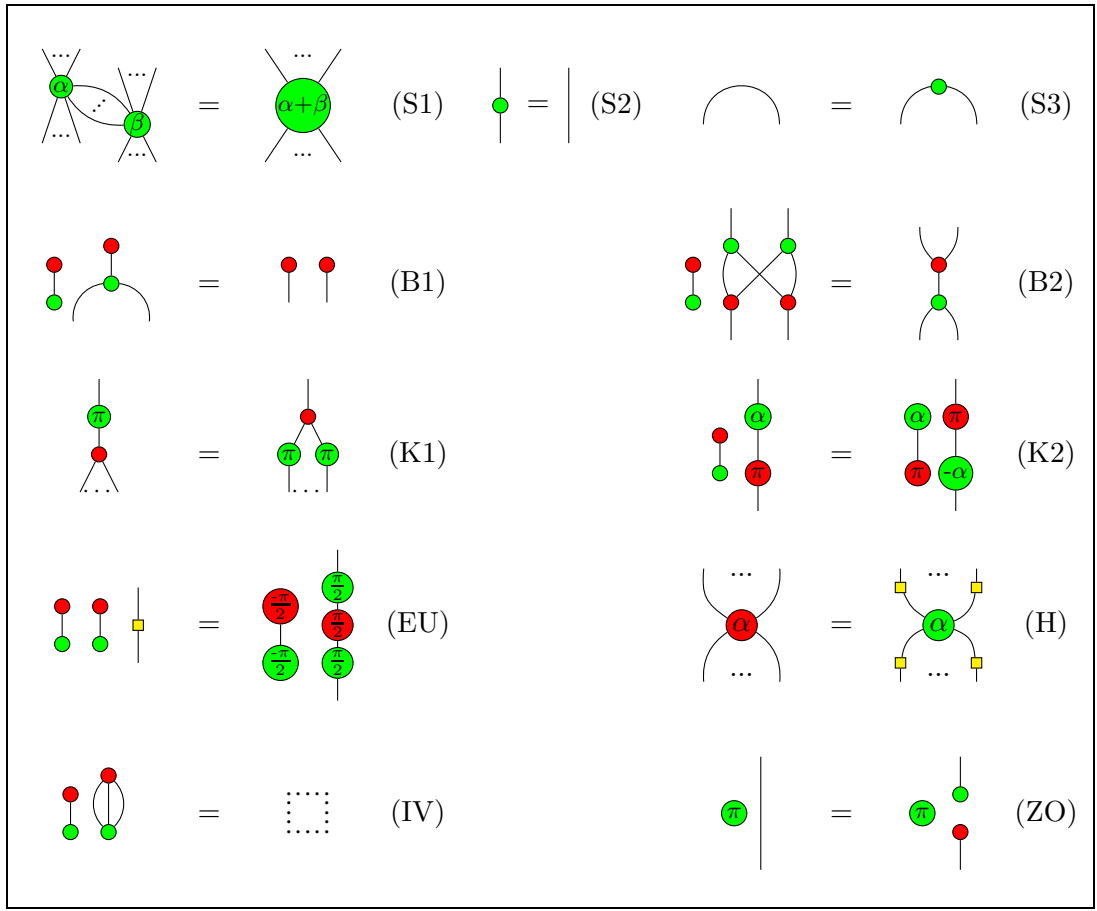

\begin{center}
 \centering
  \begin{tabular}{|ccccccccc|}
   \hline
%   &&&& \\
   &&&&&&&& \\
\beginpgfgraphicnamed{scalars//spiderlt}
\InputIfFileExists{scalars//spiderlt.tikz}{}{\input{./figures/scalars//spiderlt.tikz}}
\endpgfgraphicnamed&=&%
\beginpgfgraphicnamed{scalars//spiderrt}
\InputIfFileExists{scalars//spiderrt.tikz}{}{\input{./figures/scalars//spiderrt.tikz}}
\endpgfgraphicnamed &(S1) &%
\beginpgfgraphicnamed{scalars//s2new}
\InputIfFileExists{scalars//s2new.tikz}{}{\input{./figures/scalars//s2new.tikz}}
\endpgfgraphicnamed (S2)& %
\beginpgfgraphicnamed{scalars//induced_compact_structure-2wirelt}
\InputIfFileExists{scalars//induced_compact_structure-2wirelt.tikz}{}{\input{./figures/scalars//induced_compact_structure-2wirelt.tikz}}
\endpgfgraphicnamed&=&%
\beginpgfgraphicnamed{scalars//induced_compact_structure-2wirert}
\InputIfFileExists{scalars//induced_compact_structure-2wirert.tikz}{}{\input{./figures/scalars//induced_compact_structure-2wirert.tikz}}
\endpgfgraphicnamed&(S3)\\
  % &&&& \\
  % \tikzfig{scalars//induced_compact_structure-2wire}&(S3) && \tikzfig{scalars//star_rule}&(SR)\\
   &&&&&&&& \\
\beginpgfgraphicnamed{scalars//b1slt}
\InputIfFileExists{scalars//b1slt.tikz}{}{\input{./figures/scalars//b1slt.tikz}}
\endpgfgraphicnamed&=&%
\beginpgfgraphicnamed{scalars//b1srt}
\InputIfFileExists{scalars//b1srt.tikz}{}{\input{./figures/scalars//b1srt.tikz}}
\endpgfgraphicnamed&(B1) && %
\beginpgfgraphicnamed{scalars//b2slt}
\InputIfFileExists{scalars//b2slt.tikz}{}{\input{./figures/scalars//b2slt.tikz}}
\endpgfgraphicnamed&=&%
\beginpgfgraphicnamed{scalars//b2srt}
\InputIfFileExists{scalars//b2srt.tikz}{}{\input{./figures/scalars//b2srt.tikz}}
\endpgfgraphicnamed&(B2)\\
   &&&&&&&& \\
\beginpgfgraphicnamed{scalars//k1lt}
\InputIfFileExists{scalars//k1lt.tikz}{}{\input{./figures/scalars//k1lt.tikz}}
\endpgfgraphicnamed&=& %
\beginpgfgraphicnamed{scalars//k1rt}
\InputIfFileExists{scalars//k1rt.tikz}{}{\input{./figures/scalars//k1rt.tikz}}
\endpgfgraphicnamed&(K1) && %
\beginpgfgraphicnamed{scalars//k2slt}
\InputIfFileExists{scalars//k2slt.tikz}{}{\input{./figures/scalars//k2slt.tikz}}
\endpgfgraphicnamed&=&%
\beginpgfgraphicnamed{scalars//k2srt}
\InputIfFileExists{scalars//k2srt.tikz}{}{\input{./figures/scalars//k2srt.tikz}}
\endpgfgraphicnamed&(K2)\\
   &&&&&&&& \\
\beginpgfgraphicnamed{scalars//HadaDecomSingleslt}
\InputIfFileExists{scalars//HadaDecomSingleslt.tikz}{}{\input{./figures/scalars//HadaDecomSingleslt.tikz}}
\endpgfgraphicnamed&=&  %
\beginpgfgraphicnamed{scalars//HadaDecomSinglesrt}
\InputIfFileExists{scalars//HadaDecomSinglesrt.tikz}{}{\input{./figures/scalars//HadaDecomSinglesrt.tikz}}
\endpgfgraphicnamed&(EU) && %
\beginpgfgraphicnamed{scalars//h2newlt}
\InputIfFileExists{scalars//h2newlt.tikz}{}{\input{./figures/scalars//h2newlt.tikz}}
\endpgfgraphicnamed&=&%
\beginpgfgraphicnamed{scalars//h2newrt}
\InputIfFileExists{scalars//h2newrt.tikz}{}{\input{./figures/scalars//h2newrt.tikz}}
\endpgfgraphicnamed&(H)\\
   &&&&&&&& \\
\beginpgfgraphicnamed{scalars//inverserulelt}
\InputIfFileExists{scalars//inverserulelt.tikz}{}{\input{./figures/scalars//inverserulelt.tikz}}
\endpgfgraphicnamed&=&\dempty&(IV) && %
\beginpgfgraphicnamed{scalars//zo1lt}
\InputIfFileExists{scalars//zo1lt.tikz}{}{\input{./figures/scalars//zo1lt.tikz}}
\endpgfgraphicnamed&=&%
\beginpgfgraphicnamed{scalars//zo1rt}
\InputIfFileExists{scalars//zo1rt.tikz}{}{\input{./figures/scalars//zo1rt.tikz}}
\endpgfgraphicnamed&(ZO)\\
   &&&&&&&& \\
   \hline
  \end{tabular}
\end{center}

  \caption{Rules of ZX-calculus. The colour-swapped and/or upside-down versions of each rule also applies. Horizontal dots ($\ldots$) mean `arbitrary number', whereas diagonal dots ($\iddots$) mean `at least one'.}\label{figure1}
\end{figure}

  \noindent{\bf Spider.} \def\arraystretch{0.5}
  According to the  (S1) rule any two directly connected green dots can be merged. Moreover, a dot with a single input, single output and angle $0$ can be removed according to the (S2) rule.  These rules have their origins in the axiomatisation of orthonormal bases by means of the dagger special Frobenius algebras (see \cite{cpv} for details). According to the standard interpretation $\denote .$,  the green dots are associated with the so-called standard basis  %$\{\left(\begin{array}{@{}c@{}}1\\0\end{array}\right), \left(\begin{array}{@{}c@{}}0\\1\end{array}\right)\}$,
  $\{{1\choose 0},{0\choose 1}\}$, whereas the red dots (which also satisfies the spider property since colour-swapped rules also apply)  are associated with the so-called diagonal basis $\{\frac1{\sqrt2}\!{1\choose 1}, \frac1{\sqrt2}\!{1\choose \text{-}1}\}$.

 \noindent{\bf Green-Red Interactions.}  Monochromatic diagrams are lax:  according to the (S1) rule  any (green- or red-) monochromatic  connected diagram is equivalent to a single dot with the appropriate number of legs and whose angle is the sum of the angles. Thus the interesting structure arises when the two colours interact. The bialgebra rule (B1) and the copy rule (B2), imply that the red and the green bases are complementary, which roughly speaking captures the notion of uncertainty principle and  of unbiasedness, a fundamental property in quantum information (see \cite{CoeckeDuncan} for details).

 \noindent{\bf Parallel wires and Hopf law.}  (B1) and (B2) rules imply the following Hopf law \cite{CoeckeDuncan,DD}:
 $%
\beginpgfgraphicnamed{scalars//hopfgeneral1}
\InputIfFileExists{scalars//hopfgeneral1.tikz}{}{\input{./figures/scalars//hopfgeneral1.tikz}}
\endpgfgraphicnamed=%
\beginpgfgraphicnamed{scalars//hopfgeneral2}
\begin{tikzpicture}[scale=0.6]
	\begin{pgfonlayer}{nodelayer}
		\node [style=none] (0) at (0, 0.75) {};
		\node [style=srn] (1) at (0, -0.25) {};
		\node [style=none] (2) at (0, -0.75) {};
		\node [style=sgn] (3) at (0, 0.25) {};
	\end{pgfonlayer}
	\begin{pgfonlayer}{edgelayer}
		\draw (0.center) to (3);
		\draw (1) to (2.center);
	\end{pgfonlayer}
\end{tikzpicture}}
\endpgfgraphicnamed$ % where $\tikzfig{scalars//dualizer}$
  where $%
\beginpgfgraphicnamed{scalars//dualizer}
\InputIfFileExists{scalars//dualizer.tikz}{}{\input{./figures/scalars//dualizer.tikz}}
\endpgfgraphicnamed:=%
\beginpgfgraphicnamed{scalars//dualizerdef}
\InputIfFileExists{scalars//dualizerdef.tikz}{}{\input{./figures/scalars//dualizerdef.tikz}}
\endpgfgraphicnamed$ is the called the antipode.  The (S3) rule trivialises the antipode and simplifies the Hopf law:\\
   {$\phantom{a}\qquad\quad\quad\quad\qquad\qquad\quad\quad\quad\qquad\quad\quad\quad%
\beginpgfgraphicnamed{scalars//hopfgeneral}
\InputIfFileExists{scalars//hopfgeneral.tikz}{}{\input{./figures/scalars//hopfgeneral.tikz}}
\endpgfgraphicnamed~=~%
\beginpgfgraphicnamed{scalars//hopfgeneral2}
}
\endpgfgraphicnamed$\hfill{\bf (Hopf Law)}}

Hopf law has then a simple graphical meaning: two parallel wires between dots of distinct colours can be removed (up to the scalar $\gdot$). Notice that any pair of complementary basis in arbitrary finite dimension satisfies the rules  (S1), (S2), (B1) and (B2). However the  (S3) rule implies that the dimension of the corresponding Hilbert space is a power of two. As a consequence the ZX-calculus is a language dedicated to \emph{qubit quantum mechanics}. %for any finite dimension Hilbert space and any two unbiased basis green and red,

% Notice that

 \noindent{\bf Classical point.} In the context of complementary basis, the rules  (K1) and (K2) imply that \gdzopi is a classical point. Intuitively, it means that\gdzopi together with\gdzoz are two elements of the red basis, so in dimension $2$ they form an orthogonal basis. % this axiom implies that $\pi$ together with $0$ are two elements of the basis. So, in dimension two they form an orthogonal basis.

 \noindent{\bf Colour change.} According to the (H) rule, $\sdH$ can be used to change the colour of a dot. The (EU) rule corresponds to the Euler decomposition of the Hadamard matrix into three elementary rotations. %This equation was not present in the very first version of the ZX-calculus \cite{CD} but has been proved to be not derivable from the other axioms of the language \cite{DP} and has been added to the language from then.

 \noindent{\bf Scalar and zero.} A scalar is a diagram with no input and no output. The standard interpretation of such a diagram is  a complex number. While for simplicity, scalars have been ignored in several versions of the ZX calculus \cite{CoeckeDuncan, Miriam1},  recently several rules have been introduce for scalars \cite{Miriam3} and then simplified in \cite{bpw},  leading to the two rules (IV) and (ZO) presented in Figure \ref{figure1}. %, inparticular for diagrams whihc interpration is zero and are , as a consequence absorbing elements.
%Notice that these new axioms have been introduced in the context of the $\pi/2$ fragment, i.e. angles are restricted to multiple of $\pi/2$, and are for the first time considered in the general ZX calculus without restrictions on the angles.
 As the interpretation of the empty diagram is $1$, the (IV) rule implies that $\diroot$ is the inverse of $\droot$. The interpretation of $\gpi$ is $0$, as a consequence for any diagrams $D_1$ and $D_2$, $\denote{\gpi \otimes D_1} = \denote{\gpi\otimes D_2}$. This absorbing property is captured by the  (ZO) rule.

%. These axioms have been simplified, leading to the two axioms (IV) and (ZO) presented in Figure \ref{}. %In the present paper we consider for the first time these axioms in the general setting (no restriction on the angles)

  %the scalar has been introduced back to the language for the pi/2 fragment. ...

\noindent{\bf Context.} The rules of the language presented in Figure \ref{figure1} can be applied to any subdiagram. In other words, if $ZX\vdash D_1 = D_2$ then, for any $D$ (with the appropriate number of inputs/ouputs), $ZX\vdash D\otimes D_1 = D\otimes D_2$ ;   $ZX\vdash D_1\otimes D = D_2\otimes D$ ; $ZX\vdash D\circ D_1 = D\circ D_2$ ; and  $ZX\vdash D_1\circ D = D_2\circ  D$.

 \noindent{\bf Only topology matters.}
A ZX-diagram can be deformed without changing its interpretation. This property is known as ``only topology matters'' in \cite{CoeckeDuncan}. E.g.
$$%
\beginpgfgraphicnamed{scalars//capswap}
\InputIfFileExists{scalars//capswap.tikz}{}{\input{./figures/scalars//capswap.tikz}}
\endpgfgraphicnamed ~=\drcapf (A) \qquad %
\beginpgfgraphicnamed{scalars//yanking1}
\InputIfFileExists{scalars//yanking1.tikz}{}{\input{./figures/scalars//yanking1.tikz}}
\endpgfgraphicnamed~ =~ %
\beginpgfgraphicnamed{scalars//yangkingline}
\begin{tikzpicture}[scale=0.6]
	\begin{pgfonlayer}{nodelayer}
		\node [style=none] (0) at (0, 0.5) {};
		\node [style=none] (1) at (0, -0.5) {};
	\end{pgfonlayer}
	\begin{pgfonlayer}{edgelayer}
		\draw (0.center) to (1.center);
	\end{pgfonlayer}
\end{tikzpicture}}
\endpgfgraphicnamed ~(B)\qquad%
\beginpgfgraphicnamed{scalars//commute1}
\InputIfFileExists{scalars//commute1.tikz}{}{\input{./figures/scalars//commute1.tikz}}
\endpgfgraphicnamed =~%
\beginpgfgraphicnamed{scalars//commute2}
\InputIfFileExists{scalars//commute2.tikz}{}{\input{./figures/scalars//commute2.tikz}}
\endpgfgraphicnamed~(C)\qquad %
\beginpgfgraphicnamed{scalars//bendingnew}
\InputIfFileExists{scalars//bendingnew.tikz}{}{\input{./figures/scalars//bendingnew.tikz}}
\endpgfgraphicnamed ~=%
\beginpgfgraphicnamed{scalars//nonbending}
\InputIfFileExists{scalars//nonbending.tikz}{}{\input{./figures/scalars//nonbending.tikz}}
\endpgfgraphicnamed ~(D)$$
%
%[[swap CAP]]   [[Yanking]]  [[bending] [[ commutativity ]]

%This property is an example of the great benefit of the graphical approach to quantum reasoning: obvious deformation may correspond to non trivial properties in terms of matrices. This is the essence of the ZX calculus, to capture graphically the the matrix calculus are just trivialised in ZX
%
%
 ``Only topology matters''  is a consequence of the underlying dagger compact closed structure (e.g. Eq. A and B), %together with the axioms given in figure \ref{} and detailed bellow, which allow
 together with the ability to interchange any two legs (Eq. C) and to turn inputs into outputs (Eq D) and vice-versa. Equations C and D are non standard in dagger compact closed categories, and are consequences of the other rules of the ZX-calculus \cite{bpw}. % which are summarised in Figure 1 described below. % (see Appendix A for details on the "only topology matters" property). % for a graphical language for compact closed categories, they are consequences of the other axioms of the ZX-calculus described below.

% \ed
 \subsection{Soundness and Completeness}

 \noindent {\bf (In-)Completeness.}
 All the rules of the ZX calculus are sound with respect to the standard interpretation, i.e. if $ZX\vdash D_1 = D_2$ then $\denote{D_1} = \denote{D_2}$. The converse of soundness is completeness: the language would be complete if $\denote {D_1}  = \denote{D_2}$ implies $ZX\vdash D_1= D_2$. The completeness would imply that one can forget matrices and do graphical reasoning only. Completeness would also imply that  all the fundamental properties of qubit quantum mechanics are graphically captured by the rules of the ZX-calculus. This desirable property is one of the main open questions in categorical quantum mechanics. In the following , we review the known  results about the completeness of the ZX-calculus, which are essentially depending on the considered fragment (restriction on the angles) of the language.

 The very first result of incompleteness was about the original ZX-calculus in which the Euler decomposition\footnote{By Euler decomposition we mean the existence, for any $1$-qubit unitary $U$, of 4 angles $\alpha,\beta,\gamma,\delta$ s.t. $U = e^{i\alpha}R_x(\beta)R_z(\gamma)R_x(\delta)$ where $R_x(.)$ and $R_z(.)$ are elementary rotations about orthogonal axis.} of $H$, the (EU) rule in Figure \ref{figure1} was not derivable. This equation is now part of the language. Backens \cite{Miriam1} proved that the $\frac \pi 2$ fragment is complete.  Schr\"oder and Zamdzhiev proved that the language is not complete in general.  Their argument is also based on some  Euler decomposition, but contrary to the previous case this decomposition involves non rational multiples of $\pi$. The most natural way -- and actually the only known way -- to bypass this incompleteness result is to consider a fragment of the language.
  Indeed, irrational multiples of $\pi$ are not necessary for approximate universality. As the $\frac \pi2$-fragment is not approximately universal, the most interesting candidate for completeness is the $\frac \pi4$-fragment which is approximately universal. The completeness for the $\frac \pi 4$-fragment has been conjectured in \cite{Miriam2} and actually proved in the  single qubit case, i.e. for path diagrams. The use of path diagrams (diagrams with all dots of degree two) is rather restrictive, but the completeness for this class of diagrams is not trivial and is sufficient to show that any argument based on some Euler decomposition cannot be applied in the $\pi/4$ case. 
However, we disprove the conjecture:  the $\frac \pi 4$-fragment of the ZX-calculus is not complete (corollary \ref{cor:incomp}), using a novel approach not based on Euler decompositions.

 \noindent{\bf Scalars and completeness.}
 In several versions of the ZX-calculus scalars are ignored, leading to a slightly different notion of soundness and completeness involving proportionality. Roughly speaking, ignoring the scalars consists in an additional rule which allows  one to freely  add or remove diagram with no input/output. A particular attention has to be paid to `zero' diagrams, i.e. diagrams whose interpretations are zero, like $\gpi$. When scalars are ignored, the notion of soundness is modified as follows: if $D_1=D_2$ then  $\denote{D_1}$ and $\denote{D_2}$ are proportional. The definition of completeness is modified likewise.  Notice that in \cite{Vladimir} yet another notion of soundness is considered where scalars are not ignored in general but global phases are, i.e. if $D_1=D_2$ then $\exists \theta, \denote{D_1} = e^{i\theta}\denote{D_2}$. Our main result of incompleteness (Theorem \ref{thm:incomp}) applies for any of these variants of soundness/completeness. However, we believe that the recent attempts to treat carefully the scalars and in particular the zero scalar are valuable, that is why we consider in this paper the strict notion of soundness and completeness.

 \section{Supplementarity and antiphase twins}\label{sec:supplementarity}

\newcommand{\splus}{\raisebox{0.03cm}{\tiny\textsf{+}}}
\newcommand{\sminus}{\raisebox{0.03cm}{\tiny\textsf{-}}}

In \cite{bobbill}, Coecke and Edwards introduced the notion of \emph{supplementarity} by pointing out that when $\alpha\neq 0\bmod \pi$ the standard interpretation of the following diagram is proportional to the projector $10 \choose 00$ if $\alpha-\beta = \pi$ and to the projector $00\choose 01$ if $\alpha+\beta = \pi$.
$${%
\beginpgfgraphicnamed{scalars//prop0}
\InputIfFileExists{scalars//prop0.tikz}{}{\input{./figures/scalars//prop0.tikz}}
\endpgfgraphicnamed} $$
Putting back the scalars, one gets the following equations, which are true for any angle $\alpha$, even when $\alpha = 0$:
$$
\left\llbracket%
\beginpgfgraphicnamed{scalars//prop1sc}
\InputIfFileExists{scalars//prop1sc.tikz}{}{\input{./figures/scalars//prop1sc.tikz}}
\endpgfgraphicnamed\right\rrbracket=\left\llbracket%
\beginpgfgraphicnamed{scalars//prop2sc}
\InputIfFileExists{scalars//prop2sc.tikz}{}{\input{./figures/scalars//prop2sc.tikz}}
\endpgfgraphicnamed\right\rrbracket
~~~\text{and}~~~ \left\llbracket%
\beginpgfgraphicnamed{scalars//prop5sc}
\InputIfFileExists{scalars//prop5sc.tikz}{}{\input{./figures/scalars//prop5sc.tikz}}
\endpgfgraphicnamed\right\rrbracket=\left\llbracket%
\beginpgfgraphicnamed{scalars//prop4sc}
\InputIfFileExists{scalars//prop4sc.tikz}{}{\input{./figures/scalars//prop4sc.tikz}}
\endpgfgraphicnamed\right\rrbracket.
$$
Albeit Coecke and Edwards did not address explicitly the question of proving whether these equations can be derived in the ZX-calculus or not, these equations were known to be candidates for proving the incompleteness of the language\footnote{Personnal communications with Miriam Backens and Aleks Kissinger.}. We prove in section \ref{sec:incomp} that these equations can be derived in the ZX-calculus only when $\alpha = 0\bmod \frac \pi 2$.

Inspired by the property pointed out by Coecke and Edwards we introduce the following equation that we call \emph{supplementarity}:
\begin{equation}\label{eqn:supp}
\hspace{4.5cm}%
\beginpgfgraphicnamed{scalars//suppplenew}
\InputIfFileExists{scalars//suppplenew.tikz}{}{\input{./figures/scalars//suppplenew.tikz}}
\endpgfgraphicnamed
\end{equation}

Supplementarity is sound in the sense that both diagrams of (Eq. \ref{eqn:supp}) have the same standard interpretation $\frac 1{\sqrt 2}{1-e^{2i\alpha} \choose 0}$. It is provable in the ZX-calculus that supplementarity  (Eq. \ref{eqn:supp}) is equivalent to the equations pointed out by Coecke and Edwards:

\begin{lemma}\label{lem:sup} In the ZX calculus, for any $\alpha\in [0, 2\pi)$:
\emph{\begin{equation*}\label{equiv}
\beginpgfgraphicnamed{scalars//suppplenew}
\InputIfFileExists{scalars//suppplenew.tikz}{}{\input{./figures/scalars//suppplenew.tikz}}
\endpgfgraphicnamed~~
%\Leftrightarrow ~~\tikzfig{scalars//countgen2scnew}~~
\qquad\Leftrightarrow \qquad
\beginpgfgraphicnamed{scalars//prop1sc}
\InputIfFileExists{scalars//prop1sc.tikz}{}{\input{./figures/scalars//prop1sc.tikz}}
\endpgfgraphicnamed~=~%
\beginpgfgraphicnamed{scalars//prop2sc}
\InputIfFileExists{scalars//prop2sc.tikz}{}{\input{./figures/scalars//prop2sc.tikz}}
\endpgfgraphicnamed
\qquad\Leftrightarrow\qquad %
\beginpgfgraphicnamed{scalars//prop5sc}
\InputIfFileExists{scalars//prop5sc.tikz}{}{\input{./figures/scalars//prop5sc.tikz}}
\endpgfgraphicnamed~=~%
\beginpgfgraphicnamed{scalars//prop4sc}
\InputIfFileExists{scalars//prop4sc.tikz}{}{\input{./figures/scalars//prop4sc.tikz}}
\endpgfgraphicnamed
%\tikzfig{scalars//supplepj1}~~
%\Leftrightarrow ~~\tikzfig{scalars//supplepj0}
\end{equation*}}

\end{lemma}

The proof of Lemma \ref{lem:sup} is given in appendix.

Graphically, the supplementarity equation can be interpreted  as merging two dots in a particular configuration: they are antiphase (i.e. same colour and the difference between the two angles is $\pi$); of degree one; and they have the same neighbour. While antiphase is a necessary condition, the other conditions can be relaxed to any ``twins'' as follows:

\begin{definition}[Antiphase Twins] Two dots $u$ and $v$ in a ZX-diagram are antiphase twins if:
\begin{itemize}
\item they have the same colour;
\item the difference between their angles is $\pi$;
\item they have the same neighbourhood: for any other vertex  (\drd, \dgd or \sdH) $w$, the number of wires connecting $u$ to $w$, and $v$ to $w$ are the same. %this vertex to the first dot is the same the the number of wires connecting this vertex and the second one
\end{itemize}

\end{definition}

% dots in a parof degree one, with same color and same neighbour.

Notice that antiphase twins might be directly connected or not. Here two examples of antiphase twins and how they merge:
$$%
\beginpgfgraphicnamed{scalars//antiphasexb1}
\InputIfFileExists{scalars//antiphasexb1.tikz}{}{\input{./figures/scalars//antiphasexb1.tikz}}
\endpgfgraphicnamed  \quad\mapsto\quad   %
\beginpgfgraphicnamed{scalars//antiphasexb2}
\InputIfFileExists{scalars//antiphasexb2.tikz}{}{\input{./figures/scalars//antiphasexb2.tikz}}
\endpgfgraphicnamed   \qquad\qquad\qquad\qquad\qquad   %
\beginpgfgraphicnamed{scalars//antiphasexa1}
\InputIfFileExists{scalars//antiphasexa1.tikz}{}{\input{./figures/scalars//antiphasexa1.tikz}}
\endpgfgraphicnamed  \quad\mapsto\quad     %
\beginpgfgraphicnamed{scalars//antiphasexa2}
\InputIfFileExists{scalars//antiphasexa2.tikz}{}{\input{./figures/scalars//antiphasexa2.tikz}}
\endpgfgraphicnamed
$$

\begin{theorem}[Antiphase Twins and Supplementarity]
In ZX-calculus, antiphase twins can be merged if and only if $\forall \alpha, %
\beginpgfgraphicnamed{scalars//ssuppplenew}
\InputIfFileExists{scalars//ssuppplenew.tikz}{}{\input{./figures/scalars//ssuppplenew.tikz}}
\endpgfgraphicnamed$.
\end{theorem}

\begin{proof}
$[\Rightarrow]$ Supplementarity equation can be proved by merging antiphase twins.\\% is a particular case of antiphase twins, thus is antiphase twins can be merged, the supplementarity equation is true.\\
$[\Leftarrow]$ Let $u$ and $v$ be antiphase twins. We assume w.l.o.g. that $u$ and $v$ are green dots. \\(a) If $u$ and $v$ are neighbours or if one of their neighbours is a green dot then $u$ and $v$ can be merged thanks to the (S1) rule, like in the following example. Notice that in this case supplementarity is not used.
$$%
\beginpgfgraphicnamed{scalars//antiphasexc1}
\InputIfFileExists{scalars//antiphasexc1.tikz}{}{\input{./figures/scalars//antiphasexc1.tikz}}
\endpgfgraphicnamed  \quad =\quad  %
\beginpgfgraphicnamed{scalars//antiphasexc2}
\InputIfFileExists{scalars//antiphasexc2.tikz}{}{\input{./figures/scalars//antiphasexc2.tikz}}
\endpgfgraphicnamed \quad= \quad  %
\beginpgfgraphicnamed{scalars//antiphasexc3}
\InputIfFileExists{scalars//antiphasexc3.tikz}{}{\input{./figures/scalars//antiphasexc3.tikz}}
\endpgfgraphicnamed
$$
\noindent(b) If all neighbours of $u$ and $v$ are red dots, the sub-diagram induced by $u$, $v$ and their neighbours is a complete bipartite green red diagram which can be simplified using the following generalised bialgebra equation, proved (without the scalars) in \cite{perdrixduncan2}:

\centerline{%
\beginpgfgraphicnamed{scalars//gbia1}
\InputIfFileExists{scalars//gbia1.tikz}{}{\input{./figures/scalars//gbia1.tikz}}
\endpgfgraphicnamed\quad =\quad%
\beginpgfgraphicnamed{scalars//gbia3}
\InputIfFileExists{scalars//gbia3.tikz}{}{\input{./figures/scalars//gbia3.tikz}}
\endpgfgraphicnamed}
\noindent where $k>0$ is the number of outputs.
This equation on complete bipartite sub-diagrams can be used to transform antiphase twins into a configuration where supplementarity can be applied, and then back to a sub-diagram where the antiphase twins have been merged like in the following example:%, antiphase twins can be merged as in the following The proof that antiphase twins can be merged is as follows:
$$%
\beginpgfgraphicnamed{scalars//biasup1}
\InputIfFileExists{scalars//biasup1.tikz}{}{\input{./figures/scalars//biasup1.tikz}}
\endpgfgraphicnamed=%
\beginpgfgraphicnamed{scalars//biasup2}
\InputIfFileExists{scalars//biasup2.tikz}{}{\input{./figures/scalars//biasup2.tikz}}
\endpgfgraphicnamed=%
\beginpgfgraphicnamed{scalars//biasup3}
\InputIfFileExists{scalars//biasup3.tikz}{}{\input{./figures/scalars//biasup3.tikz}}
\endpgfgraphicnamed=%
\beginpgfgraphicnamed{scalars//biasup4}
\InputIfFileExists{scalars//biasup4.tikz}{}{\input{./figures/scalars//biasup4.tikz}}
\endpgfgraphicnamed=%
\beginpgfgraphicnamed{scalars//biasup5}
\InputIfFileExists{scalars//biasup5.tikz}{}{\input{./figures/scalars//biasup5.tikz}}
\endpgfgraphicnamed
=%
\beginpgfgraphicnamed{scalars//biasup6}
\InputIfFileExists{scalars//biasup6.tikz}{}{\input{./figures/scalars//biasup6.tikz}}
\endpgfgraphicnamed=%
\beginpgfgraphicnamed{scalars//biasup7}
\InputIfFileExists{scalars//biasup7.tikz}{}{\input{./figures/scalars//biasup7.tikz}}
\endpgfgraphicnamed$$

%$$[[]]$$
\noindent(c) If at least one of the neighbours of $u$ and $v$ is a \sdH, one can use the (EU) rule to decompose \sdH into green and red dots, then merge anti-phase twins as in case (b), and finally apply the  (EU) rule the other way around to reconstruct \sdH, like in the example below:
$$%
\beginpgfgraphicnamed{scalars/hadsup1}
\begin{tikzpicture}\footnotesize
	\begin{pgfonlayer}{nodelayer}
		\node [style=gn] (0) at (0, -0.5) {$\alpha{\splus}\pi$};
		\node [style={H box}] (1) at (0, -0) {};
		\node [style=gn] (2) at (0, 0.5) {$~\alpha~$};
	\end{pgfonlayer}
	\begin{pgfonlayer}{edgelayer}
		\draw (1) to (0);
		\draw (2) to (0);
	\end{pgfonlayer}
\end{tikzpicture}}
\endpgfgraphicnamed~=~%
\beginpgfgraphicnamed{scalars/eurulescalar}
\begin{tikzpicture}\footnotesize
	\begin{pgfonlayer}{nodelayer}
		\node [style=rn] (0) at (0, 0.5) {$\frac{\textnormal{-}\pi}{2}$};
		\node [style=gn] (1) at (0, -0.25) {$\frac{\textnormal{-}\pi}{2}$};
	\end{pgfonlayer}
	\begin{pgfonlayer}{edgelayer}
		\draw (0) to (1);
	\end{pgfonlayer}
\end{tikzpicture}}
\endpgfgraphicnamed%
\beginpgfgraphicnamed{scalars/hadsup2}
\InputIfFileExists{scalars/hadsup2.tikz}{}{\input{./figures/scalars/hadsup2.tikz}}
\endpgfgraphicnamed~=~%
\beginpgfgraphicnamed{scalars/eurulescalar}
}
\endpgfgraphicnamed~%
\beginpgfgraphicnamed{scalars/hadsup3}
\InputIfFileExists{scalars/hadsup3.tikz}{}{\input{./figures/scalars/hadsup3.tikz}}
\endpgfgraphicnamed~=~%
\beginpgfgraphicnamed{scalars/eurulescalar}
}
\endpgfgraphicnamed~%
\beginpgfgraphicnamed{scalars/hadsup4}
\InputIfFileExists{scalars/hadsup4.tikz}{}{\input{./figures/scalars/hadsup4.tikz}}
\endpgfgraphicnamed~=~%
\beginpgfgraphicnamed{scalars/eurulescalar}
}
\endpgfgraphicnamed~%
\beginpgfgraphicnamed{scalars/hadsup5}
\InputIfFileExists{scalars/hadsup5.tikz}{}{\input{./figures/scalars/hadsup5.tikz}}
\endpgfgraphicnamed~=~%
\beginpgfgraphicnamed{scalars/hadsup6}
\InputIfFileExists{scalars/hadsup6.tikz}{}{\input{./figures/scalars/hadsup6.tikz}}
\endpgfgraphicnamed$$
\end{proof}

\section{Supplementarity is necessary}\label{sec:incomp}

In this section, we prove the main result of the paper: supplementarity involving angles which are not multiples of $\frac \pi2$ cannot be derived using the rules of the ZX-calculus, and as a corollary the $\frac \pi4$-fragment of ZX-calculus is incomplete.

\begin{theorem}\label{thm:incomp}Supplementarity can be derived in the ZX-calculus only for multiples of $\pi/2$:
 %For any $\alpha \neq 0\bmod \frac \pi 2$, supplementarity cannot be derived in the ZX calculus:
  \emph{$$\left(ZX\vdash %
\beginpgfgraphicnamed{scalars//suppplenew}
\InputIfFileExists{scalars//suppplenew.tikz}{}{\input{./figures/scalars//suppplenew.tikz}}
\endpgfgraphicnamed\right) \qquad \Leftrightarrow \qquad \alpha = 0\bmod \frac \pi2 $$}
\end{theorem}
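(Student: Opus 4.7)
The $(\Leftarrow)$ direction is a direct application of Backens' completeness of the $\frac{\pi}{2}$-fragment~\cite{Miriam1}. For $\alpha\in\{0,\frac{\pi}{2},\pi,\frac{3\pi}{2}\}$, both sides of~(\ref{eqn:supp}) lie entirely within the stabilizer fragment and, by the soundness calculation preceding the statement, have the same standard interpretation $\frac{1}{\sqrt 2}(1-e^{2i\alpha},0)^T$; completeness therefore yields a ZX-derivation. One could alternatively produce explicit derivations in the four cases by hand.

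For the $(\Rightarrow)$ direction, the plan is to prove the contrapositive: for any $\alpha\notin\frac{\pi}{2}\mathbb{Z}$, we exhibit an alternative compositional interpretation $\denote{\cdot}^\sharp$ of ZX-diagrams, satisfying (i) every rule of Figure~\ref{figure1} and the structural rules of Section~\ref{sec:zx}, and (ii) $\denote{\mathrm{LHS}(\alpha)}^\sharp \neq \denote{\mathrm{RHS}(\alpha)}^\sharp$. By a straightforward induction on derivation length, $ZX\vdash D_1=D_2$ would then force $\denote{D_1}^\sharp=\denote{D_2}^\sharp$, so no such derivation of supplementarity at $\alpha$ can exist. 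The advertised corollary for the $\frac{\pi}{4}$-fragment will then follow by specialising to any $\alpha\in\frac{\pi}{4}\mathbb{Z}\setminus\frac{\pi}{2}\mathbb{Z}$.

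Constructing $\denote{\cdot}^\sharp$ is the heart of the proof, and is where the main obstacle lies. Only the spider rule (and minor variants thereof) is schematic in a free angle; the remaining rules of Figure~\ref{figure1} constrain the angles appearing in them to $\frac{\pi}{2}\mathbb{Z}$. Hence any compositional interpretation that agrees with the standard one on the sub-syntax whose spider angles are in $\frac{\pi}{2}\mathbb{Z}$ and that is compatible with angle-addition in (S1) will automatically satisfy the non-schematic rules. However, a na\"ive multiplicative twist $e^{i\beta}\mapsto u_\beta$ on the phase of spiders is hopeless: the values $u_\pi=-1$ and $u_{\pi/2}=i$ are forced by the scalar rule (ZO) and by (EU) respectively, and combined with multiplicativity they imply $u_{\beta+\pi}=-u_\beta$, which immediately re-validates supplementarity in the twist. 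The construction must therefore be genuinely non-local: it must record structural information about the diagram itself (for instance, by valuing diagrams in matrices over an extension of $\mathbb{C}$ paired with a group-valued invariant that remembers how the angles are distributed among the individual spiders, or by taking values in a richer symmetric monoidal category than $\fhilb$), so that the LHS of~(\ref{eqn:supp}) --- which has two green spiders with related angles $\alpha$ and $\alpha+\pi$ --- is distinguished from the RHS --- which has only one spider, carrying angle $2\alpha$ --- in a way visible to the interpretation. The delicate part is then to check that every ZX rule, and in particular the bialgebra, copy, colour-change, and Hopf laws (which mix the two colours and redistribute spiders non-trivially), remain satisfied by this finer interpretation; once this is done, a direct evaluation of the two sides of~(\ref{eqn:supp}) in $\denote{\cdot}^\sharp$ will witness the required inequality for every $\alpha\notin\frac{\pi}{2}\mathbb{Z}$.
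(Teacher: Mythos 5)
Your $(\Leftarrow)$ direction is exactly the paper's: both sides have standard interpretation $\frac{1}{\sqrt 2}\binom{1-e^{2i\alpha}}{0}$, so completeness of the $\frac\pi2$-fragment closes it. Your $(\Rightarrow)$ plan also agrees with the paper's strategy in outline (a sound non-standard interpretation separating the two sides of Eq.~\ref{eqn:supp}), and your diagnosis of why a multiplicative phase twist cannot work is correct --- the paper itself observes that angle-multiplication by an odd integer is sound and remains sound for supplementarity. But the proposal stops precisely where the proof begins: you never construct the interpretation, you only postulate that one with properties (i)--(ii) exists, and that existence is the entire mathematical content of the theorem. Your hints (``a group-valued invariant remembering how the angles are distributed among the spiders'', ``a richer symmetric monoidal category'') do not determine any candidate, let alone one whose soundness can be verified. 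The paper's actual construction is concrete: $\den{D}$ consists of \emph{three} parallel copies of $D$ in which the three copies of each spider of angle $\alpha$ are joined by a gadget of angle $2\alpha$ (plus bookkeeping scalars); in the general family $\den{\cdot}_{k,\ell}$, soundness forces $k$ odd via (K1) and $\ell$ even via (EU) and (ZO), and the choice $k=3$, $\ell=2$ threads exactly the needle your parity argument identifies. The doubling is what defeats your obstruction $u_{\beta+\pi}=-u_\beta$: the $\pi$ offset between the antiphase twins becomes $2\pi\equiv 0$ in the gadgets, so the LHS acquires two identical $2\alpha$-gadgets while the RHS's single spider acquires a $4\alpha$-gadget --- the invariant is carried by structure, not by a scalar factor, which is why no multiplicativity constraint applies to it.

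Two further points in your plan are inaccurate. First, it is not true that only the spider rule is schematic in a free angle: (K2) also carries a free $\alpha$ (and so does colour change (H), although the paper defines the image of red spiders so that (H) holds by construction). Hence your criterion ``agrees with the standard interpretation on the $\frac\pi2\mathbb{Z}$-sub-syntax and respects angle-addition in (S1)'' does \emph{not} automatically yield soundness; the paper must and does verify (S1), (K2) and (EU) by explicit diagrammatic derivations (Lemma~\ref{lem:sound}), and (K2) is among the delicate cases. Second, the separation at the end is not a ``direct evaluation of the two sides in $\denote{\cdot}^\sharp$'': since $\den{\cdot}$ is diagram-valued, the paper composes both images with a suitable state on the three copies, simplifies inside the ZX-calculus, and only then applies the sound standard interpretation, yielding $(1+e^{2i\alpha})(1-e^{4i\alpha})=0$, which holds iff $\alpha=0\bmod\frac\pi2$. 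As it stands, your text is a correct research programme --- matching the paper's route at the level of strategy --- but not a proof: the construction, its soundness check, and the final computation are all missing.
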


\begin{corollary}\label{cor:incomp}
The $\frac \pi 4$-fragment of ZX-calculus is not complete. In other words, ZX-calculus is not complete for the so-called ``Clifford+T quantum mechanics''.
\end{corollary}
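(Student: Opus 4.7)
The plan is to prove the two implications separately. The easy direction $(\Leftarrow)$ follows from the known completeness of the $\pi/2$-fragment \cite{Miriam1}: when $\alpha$ is a multiple of $\pi/2$, both sides of equation (\ref{eqn:supp}) lie entirely in this fragment, and the soundness discussion already shows they denote the same matrix $\frac{1}{\sqrt 2}\binom{1-e^{2i\alpha}}{0}$, so $\pi/2$-completeness immediately yields a ZX-derivation.

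For the hard direction $(\Rightarrow)$, the strategy is to exhibit a non-standard interpretation $\dens{\cdot}$ of ZX-diagrams that is sound with respect to every rule of Figure \ref{figure1} — that is, $ZX\vdash D_1=D_2$ implies $\dens{D_1}=\dens{D_2}$ — but that assigns different values to the two sides of equation (\ref{eqn:supp}) whenever $\alpha$ is not a multiple of $\pi/2$. By contraposition, the existence of such a $\dens{\cdot}$ forbids any ZX-derivation of supplementarity outside of multiples of $\pi/2$.

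A natural recipe for $\dens{\cdot}$ is to couple the standard interpretation $\denote{\cdot}$ with an auxiliary piece of data sensitive to the angles — for instance a second component living in a twisted copy of $\fhilb$, obtained by re-assigning the $R_Z(\alpha)$ and $R_X(\alpha)$ generators to angle-dependent maps chosen so that the resulting assignment still validates every ZX rule. Soundness of such a $\dens{\cdot}$ then reduces to a finite case analysis, checking each rule of Figure \ref{figure1} in turn: the spider rules (S1)--(S3), the bialgebra rules (B1)--(B2), the colour-change rule (H), and the scalar rules (IV)--(ZO) all act additively or trivially on angle data and will be the easiest to verify.

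The main obstacle is designing $\dens{\cdot}$ to be simultaneously rich enough to separate the two diagrams of supplementarity for generic $\alpha$ and rigid enough to pass the soundness check. The rules that most tightly constrain any angle-sensitive interpretation are (K2), which fixes how the $\pi$-phased classical point interacts with an arbitrary $\alpha$-dot, and (EU), which encodes the Euler decomposition of the Hadamard matrix at specific angles $\pi/2$; balancing these two constraints while still retaining a non-trivial dependence on $\alpha$ is the delicate design problem. Once such an interpretation is produced and verified, evaluating $\dens{\text{LHS}}$ and $\dens{\text{RHS}}$ for equation (\ref{eqn:supp}) should deliver an algebraic identity in $\alpha$ whose only solutions are precisely $\alpha\equiv 0\bmod \pi/2$, completing the hard direction and yielding Corollary \ref{cor:incomp} since the $\pi/4$-fragment contains the supplementarity instance at $\alpha=\pi/4$.
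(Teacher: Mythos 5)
Your overall strategy coincides with the paper's: the $(\Leftarrow)$ direction via completeness of the $\frac\pi2$-fragment \cite{Miriam1} is exactly the paper's argument, and the $(\Rightarrow)$ direction via a sound nonstandard interpretation that separates the two sides of Eq.~(\ref{eqn:supp}) is the paper's strategy for Theorem~\ref{thm:incomp}. The problem is that your proposal stops precisely where the mathematical content begins: no interpretation is actually constructed, and both the soundness check and the separation property are deferred to ``once such an interpretation is produced and verified''. That construction is not a routine detail --- it is the whole substance of the theorem, and you yourself correctly identify (K2) and (EU) as the obstructions without resolving them.

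Worse, the ``natural recipe'' you sketch --- coupling $\denoteb{\cdot}$ with a twisted copy of \fhilb obtained by reassigning $R_Z(\alpha)$ and $R_X(\alpha)$ to angle-dependent maps --- is likely to fail as stated. As the paper's Remark explains, interpretations that keep the structure of the diagram and merely modify angles are sound only when they multiply every angle by a fixed \emph{odd} integer $n$ (oddness is forced by (K1), the parity constraint by (EU) and (ZO)); but any such interpretation carries the supplementarity instance at $\alpha$ to the supplementarity instance at $n\alpha$, since $n\alpha + n\pi = n\alpha + \pi \bmod 2\pi$ and $2n\alpha + n\pi = 2n\alpha+\pi \bmod 2\pi$, and both sides then still have equal standard interpretations --- indeed the paper notes that the angle-multiplying interpretation of \cite{Vladimir} remains sound even for the supplementarity rule, so no contradiction can be extracted this way. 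The paper's actual interpretation has a genuinely different shape: $\den{D}$ is a \emph{diagram} consisting of three copies of $D$ in which the three copies of each $\alpha$-dot are connected by a gadget carrying the doubled angle $2\alpha$ (the instance $\den{\cdot}_{3,2}$ of a family $\den{\cdot}_{k,\ell}$, sound iff $k$ is odd and $\ell$ is even); it is the coupling between the copies that breaks supplementarity. Soundness (Lemma~\ref{lem:sound}) is established as a derivability statement, $ZX\vdash \den{D_1}=\den{D_2}$, with nontrivial diagrammatic verifications for (S1), (K2), (EU) and (H), and the conclusion is reached by composing $\den{\cdot}$ of Eq.~(\ref{eqn:supp}) with a suitable state, simplifying inside the calculus, and applying $\denoteb{\cdot}$ to obtain $(1+e^{2i\alpha})(1-e^{4i\alpha})=0$, i.e.\ $\alpha=0\bmod\frac\pi2$; taking $\alpha=\frac\pi4$ then gives the corollary. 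Without this (or an equally explicit) construction, your argument is a proof plan rather than a proof.
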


The rest of the section is dedicated to the proof of Theorem \ref{thm:incomp}. To do so,  we introduce an alternative interpretation $\den{.}$ for the diagrams, that we prove to be sound (Lemma \ref{lem:sound}) but  for which $\den{%
\beginpgfgraphicnamed{scalars//ssuppplenew-left}
\InputIfFileExists{scalars//ssuppplenew-left.tikz}{}{\input{./figures/scalars//ssuppplenew-left.tikz}}
\endpgfgraphicnamed}\neq \den{%
\beginpgfgraphicnamed{scalars//ssuppplenew-right}
\InputIfFileExists{scalars//ssuppplenew-right.tikz}{}{\input{./figures/scalars//ssuppplenew-right.tikz}}
\endpgfgraphicnamed}$ when $\alpha \neq 0\bmod \frac \pi 2$.

\begin{definition}For any diagram $D: n\to m$, let $\den{D}: 3n\to 3m$ be a diagram defined as follows:

\centerline{$
\def\arraystretch{0.5}
\denote{D_1\otimes D_2}^\sharp := \denote{D_1}^\sharp{\otimes} \denote{D_2}^\sharp\quad  ~\denote{D_2\circ D_1}^\sharp := \den{D_2}{\times}\den{D_1}\quad~\den{~\dempty~} := \dempty \quad~ \den{~\didf~}:=~\did~\did~\did$}

\centerline{$
\def\arraystretch{0.5}
\den{~\dHf~} := ~\dH~\dH~\dH~\qquad\den{~\dsigmaf~}:= ~\ldsigma\!\!\!\!\!\!\!\!\!\!\!\!\!\!\!\!\!\!\ldsigma\!\!\!\!\!\!\!\!\!\!\!\!\!\!\!\!\!\!\ldsigma~\qquad \dens{\drcupf}:=  ~\drcup\!\!\!\!\!\!\!\!\!\!\!\!\!\!\!\!\!\drcup\!\!\!\!\!\!\!\!\!\!\!\!\!\!\!\!\!\drcup~ \qquad \den{\drcapf}:= ~\drcap\!\!\!\!\!\!\!\!\!\!\!\!\!\!\!\!\!\drcap\!\!\!\!\!\!\!\!\!\!\!\!\!\!\!\!\!\drcap~ $}

\centerline{$\den{~%
\beginpgfgraphicnamed{scalars//spiderg-a}
\InputIfFileExists{scalars//spiderg-a.tikz}{}{\input{./figures/scalars//spiderg-a.tikz}}
\endpgfgraphicnamed~}  ~:=~ %
\beginpgfgraphicnamed{scalars//spiderinterpretationsc}
\InputIfFileExists{scalars//spiderinterpretationsc.tikz}{}{\input{./figures/scalars//spiderinterpretationsc.tikz}}
\endpgfgraphicnamed\qquad\qquad\den{~%
\beginpgfgraphicnamed{scalars//spiderr-a}
\InputIfFileExists{scalars//spiderr-a.tikz}{}{\input{./figures/scalars//spiderr-a.tikz}}
\endpgfgraphicnamed~}  ~:=~ %
\beginpgfgraphicnamed{scalars//spiderinterpretationsc-r}
\InputIfFileExists{scalars//spiderinterpretationsc-r.tikz}{}{\input{./figures/scalars//spiderinterpretationsc-r.tikz}}
\endpgfgraphicnamed$}

\end{definition}

Roughly speaking, $\den D$ consists of three copies of $D$ together with, for each dot of angle $\alpha$, a \emph{gadget} parameterized by the angle $2\alpha$ connecting the three copies of the dot. E.g.
$$\den{~%
\beginpgfgraphicnamed{scalars//alpha}
\begin{tikzpicture}[scale=0.5]\footnotesize
	\begin{pgfonlayer}{nodelayer}
		\node [style=none] (0) at (0, -0.75) {};
		\node [style=gn] (1) at (0, -0) {$\alpha$};
		\node [style=none] (2) at (0, 0.75) {};
	\end{pgfonlayer}
	\begin{pgfonlayer}{edgelayer}
		\draw (2.center) to (1);
		\draw (1) to (0.center);
	\end{pgfonlayer}
\end{tikzpicture}}
\endpgfgraphicnamed~}=%
\beginpgfgraphicnamed{scalars//alphainterpretationsc}
\InputIfFileExists{scalars//alphainterpretationsc.tikz}{}{\input{./figures/scalars//alphainterpretationsc.tikz}}
\endpgfgraphicnamed$$
Simple calculations show that the gadget disappears when $\alpha=0\bmod\pi$, e.g.:
$$\den{%
\beginpgfgraphicnamed{scalars//sgreendelta}
\InputIfFileExists{scalars//sgreendelta.tikz}{}{\input{./figures/scalars//sgreendelta.tikz}}
\endpgfgraphicnamed} = %
\beginpgfgraphicnamed{scalars//s3grdelta}
\InputIfFileExists{scalars//s3grdelta.tikz}{}{\input{./figures/scalars//s3grdelta.tikz}}
\endpgfgraphicnamed \qquad\qquad \den{~%
\beginpgfgraphicnamed{scalars//szdot}
\begin{tikzpicture}[scale=0.5]
	\begin{pgfonlayer}{nodelayer}
		\node [style=gn] (0) at (0.5, 0.25) {$\pi$};
		\node [style=none] (1) at (0.5, -0.5) {};
	\end{pgfonlayer}
	\begin{pgfonlayer}{edgelayer}
		\draw (0) to (1.center);
	\end{pgfonlayer}
\end{tikzpicture}}
\endpgfgraphicnamed~} = %
\beginpgfgraphicnamed{scalars//sinpzdot}
\InputIfFileExists{scalars//sinpzdot.tikz}{}{\input{./figures/scalars//sinpzdot.tikz}}
\endpgfgraphicnamed$$

\begin{lemma}[Soundness]\label{lem:sound}
$\denote{.}^\sharp$ is a sound interpretation: if $ZX \vdash D_1=D_2$ then $ZX\vdash \den{D_1} = \den{D_2}$.
\end{lemma}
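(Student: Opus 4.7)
The plan is to reduce soundness to a finite rule-by-rule check. Because $\den{.}$ is defined inductively on the structure of a diagram, preserving $\otimes$ and $\circ$ strictly and sending each generator to an explicit ZX-diagram, it extends uniquely to a strict monoidal (graphical) functor on the free PROP of ZX-diagrams. Consequently, to prove that $ZX \vdash D_1 = D_2$ implies $ZX \vdash \den{D_1} = \den{D_2}$, it suffices to verify $ZX \vdash \den{L} = \den{R}$ for every axiom $L = R$ of the ZX-calculus listed in Figure \ref{figure1}, together with the derived ``only topology matters'' identities and the colour-swapped and upside-down variants.

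I would organise the case analysis by whether the rule involves a free angle parameter. The rules (S2), (S3), (B1), (B2), (K1), (K2), (IV), (ZO), (H), and the compact-closed equations involve only spiders with angle $0$ or $\pi$. The two sample computations displayed just before the lemma show that the gadget attached by $\den{.}$ to such a spider trivialises; hence $\den{L}$ and $\den{R}$ reduce to three disjoint copies of $L$ and $R$ respectively, and the equality follows by applying the rule once in each copy. Rule (EU) fits the same pattern: every spider on the right-hand side of (EU) carries angle $\pi/2$, so the gadget parameter $2\alpha$ equals $\pi$ and all gadgets trivialise, while on the left-hand side the three copies of $H$ constituting $\den{H}$ act independently; the identity then reduces to three applications of (EU).

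The genuinely non-trivial case is the spider rule (S1) at generic angles $\alpha,\beta$. Here I must show that merging two gadgets of parameters $2\alpha$ and $2\beta$ (those attached to the spiders being fused) yields a single gadget of parameter $2(\alpha+\beta)$. I would first apply ordinary spider fusion in each of the three copies, which creates a common fused green spider per copy to which both gadgets are attached; the parameter-carrying spider inside each gadget is itself subject to spider fusion, and combining these rounds of (S1) collapses the two gadgets into one whose parameter is $2\alpha + 2\beta = 2(\alpha+\beta)$. The main obstacle is precisely this last combinatorial step: one must arrange the legs joining the three copies to the two gadgets into a configuration where spider fusion acts simultaneously on the green legs of the fused spider and on the parameter-carrying spider inside each gadget, all while preserving the overall topology of a single gadget. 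Once that manipulation is carried out, the remaining axioms and variants follow routinely from the per-copy argument.
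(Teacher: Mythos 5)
Your overall strategy---reduce soundness to a rule-by-rule verification of the axioms of Figure \ref{figure1}, using the fact that $\den{.}$ is defined compositionally---is exactly the paper's, and your observation that the gadget trivialises for spiders of angle $0$ or $\pi$, so that (S2), (S3), (B1), (B2), (K1), (IV), (ZO) reduce to three disjoint copies, also matches the paper. However, two of your case assignments are genuinely wrong. First, (K2) is \emph{not} an angle-free rule: it involves a generic angle $\alpha$, so under $\den{.}$ the corresponding spider carries a non-trivial gadget with parameter $2\alpha$, and the equality requires a real derivation---the paper applies (K2) once in the copies and once \emph{inside the gadget} (turning $2\alpha$ into $-2\alpha$), then regroups the three $\alpha$-dependent scalars using (B1), (K1) and (S1). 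Your claim that it reduces to triple copies is false. Second, your (EU) argument rests on a conflation: the gadget disappears only when its parameter is $0 \bmod 2\pi$, i.e.\ when the spider's angle is $0 \bmod \pi$. For the $\pi/2$ spiders in (EU) the gadget parameter is $\pi$, and a $\pi$-gadget does \emph{not} vanish; the paper must compute $\den{.}$ of the $\pi/2$ spiders explicitly and then invoke the auxiliary scalar Lemma \ref{2hdecom} (which in turn uses Lemma 13 of \cite{Miriam3}), so (EU) emphatically does not reduce to three independent applications of (EU). (Your stated reason for (H) being easy is also inaccurate---(H) contains a generic $\alpha$ too---though there the conclusion survives because the red-spider interpretation is \emph{defined} via (H).)

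For (S1), the one case you identify as non-trivial, your proposed mechanism would fail: the parameter-carrying spiders of the two gadgets are not connected to one another---each gadget is attached in parallel to the three copies through dots of the other colour---so no sequence of spider fusions alone can bring the $2\alpha$ and $2\beta$ spiders into contact. The paper's derivation ``essentially consists in applying the bialgebra rule (B2) twice'' to restructure the bipartite green-red connections before the two gadget parameters can be merged into $2(\alpha+\beta)$. You explicitly flag this step as ``the main obstacle'' and leave it unresolved, which means the proposal is incomplete exactly where the substance of the lemma lies; combined with the misclassified (K2) and (EU) cases, the proof attempt has genuine gaps rather than being a complete alternative argument.
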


\begin{proof}
Soundness is trivial for the $\pi$-fragment of the language  (i.e. when angles are multiples of $\pi$). Thus, it remains the four rules $(S1)$, $(K2)$, $(EU)$, and $(H)$ to complete the proof. We give the proof of $(K2)$ and a particular case of $(S1)$ to illustrate the proof, the other cases are given in appendix.

\noindent[(K2)]
\begin{align*}\begin{array}{lllllll}
\left\llbracket%
\beginpgfgraphicnamed{scalars//alphapisc}
\InputIfFileExists{scalars//alphapisc.tikz}{}{\input{./figures/scalars//alphapisc.tikz}}
\endpgfgraphicnamed\right\rrbracket^{\sharp}&=&%
\beginpgfgraphicnamed{scalars//interprepialphapf1}
\InputIfFileExists{scalars//interprepialphapf1.tikz}{}{\input{./figures/scalars//interprepialphapf1.tikz}}
\endpgfgraphicnamed&=&%
\beginpgfgraphicnamed{scalars//interprepialphapf2}
\InputIfFileExists{scalars//interprepialphapf2.tikz}{}{\input{./figures/scalars//interprepialphapf2.tikz}}
\endpgfgraphicnamed
&=&%
\beginpgfgraphicnamed{scalars//interprepialphapf3}
\InputIfFileExists{scalars//interprepialphapf3.tikz}{}{\input{./figures/scalars//interprepialphapf3.tikz}}
\endpgfgraphicnamed\\
&=&%
\beginpgfgraphicnamed{scalars//interprepialphapf4}
\InputIfFileExists{scalars//interprepialphapf4.tikz}{}{\input{./figures/scalars//interprepialphapf4.tikz}}
\endpgfgraphicnamed&=&%
\beginpgfgraphicnamed{scalars//interprepialphapf5}
\InputIfFileExists{scalars//interprepialphapf5.tikz}{}{\input{./figures/scalars//interprepialphapf5.tikz}}
\endpgfgraphicnamed&=&\left\llbracket%
\beginpgfgraphicnamed{scalars//piminusalphasc}
\InputIfFileExists{scalars//piminusalphasc.tikz}{}{\input{./figures/scalars//piminusalphasc.tikz}}
\endpgfgraphicnamed\right\rrbracket^{\sharp},
\end{array}
\end{align*}
The first equality is nothing but the definition of $\den.$.
Second step is based on the (K2) rule. The third step consists in (i) grouping the 3 scalars depending on $\alpha$ into a single one, to do so rules (B1), (K1) and finally (S1) are combined; (ii) applying the (K1) rule on the non scalar part of the diagram. Fourth step consists in applying the (K2) rule on the gadget. The fifth step is combining the scalars depending on $\alpha$. Finally for the last step we use $\left\llbracket%
\beginpgfgraphicnamed{scalars//galpharpi}
\begin{tikzpicture}
	\begin{pgfonlayer}{nodelayer}
		\node [style=gn] (0) at (0, 0.25) {$\alpha$};
		\node [style=rn] (1) at (0, -0.25) {$\pi$};
	\end{pgfonlayer}
	\begin{pgfonlayer}{edgelayer}
		\draw (0) to (1);
	\end{pgfonlayer}
\end{tikzpicture}}
\endpgfgraphicnamed\right\rrbracket^{\sharp}=%
\beginpgfgraphicnamed{scalars//g5alpharpi}
\begin{tikzpicture}
	\begin{pgfonlayer}{nodelayer}
		\node [style=gn] (0) at (0, 0.25) {$5\alpha$};
		\node [style=sgn] (1) at (-0.5, -0) {};
		\node [style=rn] (2) at (0, -0.25) {$\pi$};
	\end{pgfonlayer}
	\begin{pgfonlayer}{edgelayer}
		\draw (0) to (2);
	\end{pgfonlayer}
\end{tikzpicture}}
\endpgfgraphicnamed$ .

\noindent [(S1)] In the following we consider a particular case of the (S1) rule where the two dots are of degree 2. The following derivation essentially consists in applying the bialgebra rule (B2) twice:
\begin{align*}\begin{array}{ccccccc}
\left\llbracket%
\beginpgfgraphicnamed{scalars//alpha-beta}
\begin{tikzpicture}
	\begin{pgfonlayer}{nodelayer}
		\node [style=gn] (0) at (0, 0.25) {$\alpha$};
		\node [style=none] (1) at (0, 1) {};
		\node [style=gn] (2) at (0, -0.25) {$\beta$};
		\node [style=none] (3) at (0, -1) {};
	\end{pgfonlayer}
	\begin{pgfonlayer}{edgelayer}
		\draw (1.center) to (0);
		\draw (0) to (2);
		\draw (2) to (3.center);
	\end{pgfonlayer}
\end{tikzpicture}}
\endpgfgraphicnamed\right\rrbracket^{\sharp}&=&%
\beginpgfgraphicnamed{scalars//interprealpha+betasc}
\InputIfFileExists{scalars//interprealpha+betasc.tikz}{}{\input{./figures/scalars//interprealpha+betasc.tikz}}
\endpgfgraphicnamed&=&%
\beginpgfgraphicnamed{scalars//interprealpha+beta2sc}
\InputIfFileExists{scalars//interprealpha+beta2sc.tikz}{}{\input{./figures/scalars//interprealpha+beta2sc.tikz}}
\endpgfgraphicnamed
&=&%
\beginpgfgraphicnamed{scalars//interprealpha+beta3sc}
\InputIfFileExists{scalars//interprealpha+beta3sc.tikz}{}{\input{./figures/scalars//interprealpha+beta3sc.tikz}}
\endpgfgraphicnamed\\
&&\\
&=&%
\beginpgfgraphicnamed{scalars//interprealpha+beta4sc}
\InputIfFileExists{scalars//interprealpha+beta4sc.tikz}{}{\input{./figures/scalars//interprealpha+beta4sc.tikz}}
\endpgfgraphicnamed&=&%
\beginpgfgraphicnamed{scalars//interprealpha+beta5sc}
\InputIfFileExists{scalars//interprealpha+beta5sc.tikz}{}{\input{./figures/scalars//interprealpha+beta5sc.tikz}}
\endpgfgraphicnamed&=&\left\llbracket%
\beginpgfgraphicnamed{scalars//alpha+beta}
\begin{tikzpicture}
	\begin{pgfonlayer}{nodelayer}
		\node [style=none] (0) at (0.5, -0.75) {};
		\node [style=none] (1) at (0.5, 0.75) {};
		\node [style=gn] (2) at (0.5, 0) {\scriptsize$\alpha{\splus}\beta$};
	\end{pgfonlayer}
	\begin{pgfonlayer}{edgelayer}
		\draw (1.center) to (2);
		\draw (2) to (0.center);
	\end{pgfonlayer}
\end{tikzpicture}}
\endpgfgraphicnamed\right\rrbracket^{\sharp}
 \end{array}
\end{align*}

\end{proof}
%\ed

\begin{remark}
The interpretation $\den.$ can be naturally extended to an interpretation $\den._{k, \ell}$ which associates with every diagram $D:n\to m$ a diagram $\den D_{k,\ell}: k\times n \to k\times m$ which consists in $k$ copies of $D$ where the $k$ copies of each dot are connected by a ``gadget'' parameterized by an angle $\ell$ times larger than the angle of the original dot. Moreover $\den D_{k,\ell}$ has additional scalars, namely $k-1$ times $\droot$ per dot in $D$.
Notice that the interpretation $\den .$ used in this section is nothing but  $\den ._{3,2}$. The interpretation $\den._{k,\ell}$ is  sound  if and only if $k=1\bmod 2$ and $\ell =0\bmod 2$, indeed $(K1)$ forces $k$ to be odd while $(EU)$ and $(ZO)$ force $\ell = 0\bmod 2$. All the other rules are sound for any $k, \ell$. When $k=1$, $\den ._{1,\ell}$ is nothing but an interpretation which multiplies the angles by $\ell+1$, without changing the structure of the diagrams:  $\den . _{1,0}$ is the identity, while $\den._{1,-1}$ has been used to prove that the (EU) rule is necessary \cite{perdrixduncan2} and $\den._{1,-2}$ has been used to prove that the ZX-calculus is incomplete \cite{Vladimir}.
% is connecting
\end{remark}

\begin{proof}[Proof of Theorem \ref{thm:incomp}] In the following we prove that supplementarity can be derived in the ZX-calculus if and only if the involved angles are  multiples of $\pi/2$:

 {\emph{$$\left(ZX\vdash %
\beginpgfgraphicnamed{scalars//suppplenew}
\InputIfFileExists{scalars//suppplenew.tikz}{}{\input{./figures/scalars//suppplenew.tikz}}
\endpgfgraphicnamed\right) \qquad \Leftrightarrow \qquad \alpha = 0\bmod \frac \pi2 $$}}

\noindent[$\Leftarrow$] Since both diagrams of the supplementarity equation have the same standard interpretation $\frac 1{\sqrt 2}{1-e^{2i\alpha} \choose 0}$, by completeness of the $\frac \pi2$-fragment of the ZX-calculus, supplementarity can be derived when $\alpha$ is a multiple of $\frac \pi 2$.

%We prove  Theorem \ref{thm:incomp} by contradiction.
\noindent[$\Rightarrow$]
Let $\alpha \in [0, 2\pi)$, and assume that supplementarity (\ref{eqn:supp}) can be derived in the ZX-calculus. Since $\den.$ is sound, the following equation must be derivable in the ZX-calculus:
%Since both  $\llbracket \cdot \rrbracket$ and $\llbracket \cdot \rrbracket^{\sharp}$ are sounds, the following equation must be satisfied:
\begin{eqnarray}\label{ct2}
\qquad\qquad\qquad{\left(%
\beginpgfgraphicnamed{scalars//011}
\InputIfFileExists{scalars//011.tikz}{}{\input{./figures/scalars//011.tikz}}
\endpgfgraphicnamed\right)} \circ \den{%
\beginpgfgraphicnamed{scalars//suppplenewleft}
\InputIfFileExists{scalars//suppplenewleft.tikz}{}{\input{./figures/scalars//suppplenewleft.tikz}}
\endpgfgraphicnamed} &=& \left(%
\beginpgfgraphicnamed{scalars//011}
\InputIfFileExists{scalars//011.tikz}{}{\input{./figures/scalars//011.tikz}}
\endpgfgraphicnamed\right) \circ {\den{%
\beginpgfgraphicnamed{scalars//suppplenewright}
\InputIfFileExists{scalars//suppplenewright.tikz}{}{\input{./figures/scalars//suppplenewright.tikz}}
\endpgfgraphicnamed}}
\end{eqnarray}
The LHS diagram is as follows. The details of the derivation are given in Lemma \ref{suplint011} in the appendix.
%==
$$
\left(%
\beginpgfgraphicnamed{scalars//011}
\InputIfFileExists{scalars//011.tikz}{}{\input{./figures/scalars//011.tikz}}
\endpgfgraphicnamed\right)\circ \left\llbracket %
\beginpgfgraphicnamed{scalars//suppplenewleft}
\InputIfFileExists{scalars//suppplenewleft.tikz}{}{\input{./figures/scalars//suppplenewleft.tikz}}
\endpgfgraphicnamed \right\rrbracket^{\sharp}~~=~~%
\beginpgfgraphicnamed{scalars//inptriplesupl011s}
\InputIfFileExists{scalars//inptriplesupl011s.tikz}{}{\input{./figures/scalars//inptriplesupl011s.tikz}}
\endpgfgraphicnamed~~=~~%
\beginpgfgraphicnamed{scalars//inptriplesupl011simpli}
\InputIfFileExists{scalars//inptriplesupl011simpli.tikz}{}{\input{./figures/scalars//inptriplesupl011simpli.tikz}}
\endpgfgraphicnamed
$$
%==
%The details of the derivation are given in Lemma \ref{suplint011} in the appendix.
%$$\left(\tikzfig{scalars//011}\right) \circ \den{\tikzfig{scalars//suppplenewleft}}= \tikzfig{scalars//inptriplesupl011}$$
The RHS diagram of Eq. \ref{ct2} is:
%\begin{equation}\label{inter011simplify}
%\hspace{4.5cm}
%%\end{equation}
$$\left(%
\beginpgfgraphicnamed{scalars//011}
\InputIfFileExists{scalars//011.tikz}{}{\input{./figures/scalars//011.tikz}}
\endpgfgraphicnamed\right)\circ\left\llbracket%
\beginpgfgraphicnamed{scalars//suppplenewright}
\InputIfFileExists{scalars//suppplenewright.tikz}{}{\input{./figures/scalars//suppplenewright.tikz}}
\endpgfgraphicnamed\right\rrbracket^{\sharp}~~=~~%
\beginpgfgraphicnamed{scalars//supppleright011}
\InputIfFileExists{scalars//supppleright011.tikz}{}{\input{./figures/scalars//supppleright011.tikz}}
\endpgfgraphicnamed~~=~~%
\beginpgfgraphicnamed{scalars//supppleright011b}
\InputIfFileExists{scalars//supppleright011b.tikz}{}{\input{./figures/scalars//supppleright011b.tikz}}
\endpgfgraphicnamed~~=~~%
\beginpgfgraphicnamed{scalars//supppleright011simpli}
\begin{tikzpicture}
	\begin{pgfonlayer}{nodelayer}
		\node [style=rn] (0) at (0, -0) {$\pi$};
	\end{pgfonlayer}
\end{tikzpicture}}
\endpgfgraphicnamed$$
which is obtained first by applying the Hopf law and then thanks to the absorbing property of $\rpi$.

Thus, Eq. \ref{ct2} is equivalent to $%
\beginpgfgraphicnamed{scalars//inptriplesupl011simpli}
\InputIfFileExists{scalars//inptriplesupl011simpli.tikz}{}{\input{./figures/scalars//inptriplesupl011simpli.tikz}}
\endpgfgraphicnamed~=~%
\beginpgfgraphicnamed{scalars//supppleright011simpli}
}
\endpgfgraphicnamed$
%This equation
which can be simplified (see Lemma \ref{2a4apiispi} in appendix for details), leading to $%
\beginpgfgraphicnamed{scalars//2a4apiispileft}
\begin{tikzpicture}\footnotesize
	\begin{pgfonlayer}{nodelayer}
		\node [style=gn] (0) at (-1, -0) {$2\alpha$};
		\node [style=gn] (1) at (-0.25, -0) {$4\alpha{\splus}\pi$};
	\end{pgfonlayer}
\end{tikzpicture}}
\endpgfgraphicnamed~=~%
\beginpgfgraphicnamed{scalars//supppleright011simpli}
}
\endpgfgraphicnamed$.

Finally, since $\denoteb{.}$ is sound, it implies $\denoteb{%
\beginpgfgraphicnamed{scalars//2a4apiispileft}
}
\endpgfgraphicnamed} = \denoteb{%
\beginpgfgraphicnamed{scalars//supppleright011simpli}
}
\endpgfgraphicnamed}$, thus $(1+e^{2i\alpha})(1-e^{4i\alpha})=0$ which is equivalent to $\alpha = 0\bmod \frac \pi 2$. %So to sum up, supplementarity is derivable in the ZX-calculus only if $\alpha=0\bmod \frac{\pi}2$.
%
%
%
%
%$$\den{\tikzfig{scalars//suppplenewright}}\circ\left(\tikzfig{scalars//011}\right)= ...= \pi$$
%Thus, Eq. \ref{ct2} is equivalent to $$ ... = \pi$$ which can be simplified as follows:
%
%..
%
%
%
% By soundness of $\denoteb{.}$, and since $\denote{..}= \sin(2\alpha)\cos^3(\alpha) $ and $\denote {\pi} = 0$, Eq.~(\ref{ct2}) implies $\alpha = 0\bmod \frac \pi 2$. Thus, by contradiction,  supplementarity cannot be derived in the ZX-calculus when $\alpha \neq 0\bmod \frac \pi2$.
% %supplementarity can be derived in the ZX -calculus only if  $\sin(2\alpha)\cos^3(\alpha) =0$, i.e. only if Since for any $\alpha\neq 0\bmod \frac \pi2$,  $\sin(2\alpha)\cos^3(\alpha) \neq 0$, it implies that supplementarity cannot be derived when $\alpha \neq 0\bmod \frac \pi/2$.
% On the other hand by completeness of the $\frac \pi2$-fragment, supplementarity can be derived when $\alpha =0\bmod \frac \pi 2$.
 \end{proof}%$\hfill\Box$

\section{Conclusion and further work}

%\section{Conclusion and further work}

In this paper, we have considered  supplementarity in the context of the ZX-calculus. We provide a purely graphical interpretation of supplementarity by means of antiphase twins. We have also proved that supplementarity can  be derived in the ZX-calculus if and only if the involved angles are  multiples of $\pi/2$. As a corollary, the $\frac \pi 4$-fragment of the ZX-calculus is not complete.

We propose to add the supplementarity rule (with arbitrary angles) to the set of rules of the ZX-calculus. Notice that even augmented with the supplementarity rule the ZX-calculus is still incomplete in general,  since the argument of \cite{Vladimir} still applies (the alternative interpretation which consists in multiplying the angles by an odd number, as the one used in \cite{Vladimir}, is sound with respect to the supplementarity rule).

We leave as an open question the completeness of the $\frac \pi 4$-fragment of the ZX-calculus augmented with the supplementarity rule, as well as any fragment which does not contain irrational multiples of $\pi$. Another perspective is to determine how the presence of the supplementarity rule impacts the other rules of the language. In particular, does supplementarity subsume any of the other rules of the ZX-calculus?

\section{Acknowledgements}

The authors would like to thank Miriam Backens, Bob Coecke, Ross Duncan, Emmanuel Jeandel, and Aleks Kissinger for valuable discussions. This work was partially supported by R\'egion Lorraine.

%Clifford+T  quantum mechanics is a key fragment of pure qubit quantum mechanics: it has infinite operations on a finite number of qubits and allows any unitary operation to be approximated to arbitrary accuracy. As a powerful graphical language for quantum processes, it is important to know whether the ZX-calculus is complete for Clifford+T quantum mechanics. In this paper, we show that a graphical equation in the ZX-calculus for Clifford + T  quantum mechanics cannot be derived from all the given rewrite rules, while the standard interpretations of both sides of the equation are the same up to a global scalar.  Therefore, we proved the incompleteness of  ZX-calculus for Clifford+T  quantum mechanics. The proof relies on a new non-standard interpretation of the ZX calculus.
%
%We propose to augment the ZX-calculus with the new axiom of supplementarity. The obvious next step is to prove the completeness or incompleteness of this new language for Clifford+T quantum mechanics.
%
%%%%
%%%%

\section*{Appendix}

\begin{proof}[Proof of Lemma \ref{lem:sup}]
We prove the equivalences in the following order.

\begin{displaymath}
\begin{array}{ccccc}
\beginpgfgraphicnamed{scalars//suppplenew}
\InputIfFileExists{scalars//suppplenew.tikz}{}{\input{./figures/scalars//suppplenew.tikz}}
\endpgfgraphicnamed&\Rightarrow&%
\beginpgfgraphicnamed{scalars//supplepj0b}
\InputIfFileExists{scalars//supplepj0b.tikz}{}{\input{./figures/scalars//supplepj0b.tikz}}
\endpgfgraphicnamed&\Rightarrow& %
\beginpgfgraphicnamed{scalars//supplepj1}
\InputIfFileExists{scalars//supplepj1.tikz}{}{\input{./figures/scalars//supplepj1.tikz}}
\endpgfgraphicnamed\\
~\Rightarrow~ %
\beginpgfgraphicnamed{scalars//suppplenew}
\InputIfFileExists{scalars//suppplenew.tikz}{}{\input{./figures/scalars//suppplenew.tikz}}
\endpgfgraphicnamed.&&&&
\end{array}
\end{displaymath}

For
\begin{align*}
\beginpgfgraphicnamed{scalars//suppplenew}
\InputIfFileExists{scalars//suppplenew.tikz}{}{\input{./figures/scalars//suppplenew.tikz}}
\endpgfgraphicnamed~~~\Rightarrow~~~%
\beginpgfgraphicnamed{scalars//supplepj0b}
\InputIfFileExists{scalars//supplepj0b.tikz}{}{\input{./figures/scalars//supplepj0b.tikz}}
\endpgfgraphicnamed,
\end{align*}
we have
\begin{align*}
%\tikzfig{scalars//lmpf4}=\tikzfig{scalars//lmpf4pi}=\tikzfig{scalars//lmpf4pi2}=\tikzfig{scalars//xdot2pi}=\tikzfig{scalars//xdot}
%\tikzfig{scalars//equivproof1}
%
\beginpgfgraphicnamed{scalars//equivatob1}
\InputIfFileExists{scalars//equivatob1.tikz}{}{\input{./figures/scalars//equivatob1.tikz}}
\endpgfgraphicnamed=%
\beginpgfgraphicnamed{scalars//equivatob2}
\InputIfFileExists{scalars//equivatob2.tikz}{}{\input{./figures/scalars//equivatob2.tikz}}
\endpgfgraphicnamed=%
\beginpgfgraphicnamed{scalars//equivatob3}
\InputIfFileExists{scalars//equivatob3.tikz}{}{\input{./figures/scalars//equivatob3.tikz}}
\endpgfgraphicnamed
=%
\beginpgfgraphicnamed{scalars//equivatob4}
\InputIfFileExists{scalars//equivatob4.tikz}{}{\input{./figures/scalars//equivatob4.tikz}}
\endpgfgraphicnamed=%
\beginpgfgraphicnamed{scalars//equivatob5}
\InputIfFileExists{scalars//equivatob5.tikz}{}{\input{./figures/scalars//equivatob5.tikz}}
\endpgfgraphicnamed=%
\beginpgfgraphicnamed{scalars//equivatob6}
\InputIfFileExists{scalars//equivatob6.tikz}{}{\input{./figures/scalars//equivatob6.tikz}}
\endpgfgraphicnamed,
\end{align*}
where the bialgebra rule (B2), the Hopf law and the copy rule (B1) are used.

For
\begin{align*}
\beginpgfgraphicnamed{scalars//supplepj0b}
\InputIfFileExists{scalars//supplepj0b.tikz}{}{\input{./figures/scalars//supplepj0b.tikz}}
\endpgfgraphicnamed~~~\Rightarrow~~~%
\beginpgfgraphicnamed{scalars//supplepj1}
\InputIfFileExists{scalars//supplepj1.tikz}{}{\input{./figures/scalars//supplepj1.tikz}}
\endpgfgraphicnamed,
\end{align*}
we have
\begin{align*}
\beginpgfgraphicnamed{scalars//equivbtoc1}
\InputIfFileExists{scalars//equivbtoc1.tikz}{}{\input{./figures/scalars//equivbtoc1.tikz}}
\endpgfgraphicnamed=%
\beginpgfgraphicnamed{scalars//equivbtoc2}
\InputIfFileExists{scalars//equivbtoc2.tikz}{}{\input{./figures/scalars//equivbtoc2.tikz}}
\endpgfgraphicnamed=%
\beginpgfgraphicnamed{scalars//equivbtoc3}
\InputIfFileExists{scalars//equivbtoc3.tikz}{}{\input{./figures/scalars//equivbtoc3.tikz}}
\endpgfgraphicnamed=
\beginpgfgraphicnamed{scalars//equivbtoc4}
\InputIfFileExists{scalars//equivbtoc4.tikz}{}{\input{./figures/scalars//equivbtoc4.tikz}}
\endpgfgraphicnamed= %
\beginpgfgraphicnamed{scalars//equivbtoc5}
\InputIfFileExists{scalars//equivbtoc5.tikz}{}{\input{./figures/scalars//equivbtoc5.tikz}}
\endpgfgraphicnamed= %
\beginpgfgraphicnamed{scalars//equivbtoc6}
\InputIfFileExists{scalars//equivbtoc6.tikz}{}{\input{./figures/scalars//equivbtoc6.tikz}}
\endpgfgraphicnamed,
\end{align*}
where the $\pi$-commutation rule (K2), the inverse rule (IV) and the $\pi$-copy rule (K1) are used.

For
\begin{align*}
\beginpgfgraphicnamed{scalars//supplepj1}
\InputIfFileExists{scalars//supplepj1.tikz}{}{\input{./figures/scalars//supplepj1.tikz}}
\endpgfgraphicnamed~~~\Rightarrow~~~%
\beginpgfgraphicnamed{scalars//suppplenew}
\InputIfFileExists{scalars//suppplenew.tikz}{}{\input{./figures/scalars//suppplenew.tikz}}
\endpgfgraphicnamed,
\end{align*}
we have
\begin{displaymath}
\begin{array}{ccccccl}
\beginpgfgraphicnamed{scalars//suppplenewleft}
\InputIfFileExists{scalars//suppplenewleft.tikz}{}{\input{./figures/scalars//suppplenewleft.tikz}}
\endpgfgraphicnamed&=&%
\beginpgfgraphicnamed{scalars//equivctoa1}
\InputIfFileExists{scalars//equivctoa1.tikz}{}{\input{./figures/scalars//equivctoa1.tikz}}
\endpgfgraphicnamed&=&%
\beginpgfgraphicnamed{scalars//equivctoa2}
\InputIfFileExists{scalars//equivctoa2.tikz}{}{\input{./figures/scalars//equivctoa2.tikz}}
\endpgfgraphicnamed&=&%
\beginpgfgraphicnamed{scalars//equivctoa3}
\InputIfFileExists{scalars//equivctoa3.tikz}{}{\input{./figures/scalars//equivctoa3.tikz}}
\endpgfgraphicnamed\\
= %
\beginpgfgraphicnamed{scalars//equivctoa4}
\InputIfFileExists{scalars//equivctoa4.tikz}{}{\input{./figures/scalars//equivctoa4.tikz}}
\endpgfgraphicnamed&=& %
\beginpgfgraphicnamed{scalars//equivctoa5}
\InputIfFileExists{scalars//equivctoa5.tikz}{}{\input{./figures/scalars//equivctoa5.tikz}}
\endpgfgraphicnamed&=&%
\beginpgfgraphicnamed{scalars//equivctoa6}
\InputIfFileExists{scalars//equivctoa6.tikz}{}{\input{./figures/scalars//equivctoa6.tikz}}
\endpgfgraphicnamed&=&%
\beginpgfgraphicnamed{scalars//suppplenewright}
\InputIfFileExists{scalars//suppplenewright.tikz}{}{\input{./figures/scalars//suppplenewright.tikz}}
\endpgfgraphicnamed,
\end{array}
\end{displaymath}
%\vspace{0.5cm}

where the copy rule (B1), $\pi$-commutation rule (K2),  inverse rule (IV), $\pi$-copy rule (K1), color change rule (H) and the Hopf law are the are used.

\end{proof}

\begin{lemma}\label{2hdecom}
In the scaled ZX-calculus,
\begin{align*}
\beginpgfgraphicnamed{scalars//mphdecom}
\InputIfFileExists{scalars//mphdecom.tikz}{}{\input{./figures/scalars//mphdecom.tikz}}
\endpgfgraphicnamed.
\end{align*}
\end{lemma}

\begin{proof}
From the (H) rule, it is clear that %
\beginpgfgraphicnamed{scalars//hsquareis1new}
\InputIfFileExists{scalars//hsquareis1new.tikz}{}{\input{./figures/scalars//hsquareis1new.tikz}}
\endpgfgraphicnamed. Therefore,
\begin{align*}
\beginpgfgraphicnamed{scalars//mphdecomprf1}
\InputIfFileExists{scalars//mphdecomprf1.tikz}{}{\input{./figures/scalars//mphdecomprf1.tikz}}
\endpgfgraphicnamed=%
\beginpgfgraphicnamed{scalars//mphdecomprf2new}
\InputIfFileExists{scalars//mphdecomprf2new.tikz}{}{\input{./figures/scalars//mphdecomprf2new.tikz}}
\endpgfgraphicnamed=%
\beginpgfgraphicnamed{scalars//mphdecomprf3new}
\InputIfFileExists{scalars//mphdecomprf3new.tikz}{}{\input{./figures/scalars//mphdecomprf3new.tikz}}
\endpgfgraphicnamed=%
\beginpgfgraphicnamed{scalars//mphdecomprf4new}
\InputIfFileExists{scalars//mphdecomprf4new.tikz}{}{\input{./figures/scalars//mphdecomprf4new.tikz}}
\endpgfgraphicnamed=%
\beginpgfgraphicnamed{scalars//mphdecomprf5}
\InputIfFileExists{scalars//mphdecomprf5.tikz}{}{\input{./figures/scalars//mphdecomprf5.tikz}}
\endpgfgraphicnamed.
\end{align*}
Note that we used Lemma 13 of \cite{Miriam3} for the third equality.

\end{proof}

\begin{proof}[Proof of Lemma \ref{lem:sound}]
%\noindent{\emph{Proof of Lemma \ref{lem:sound} (sondness).}

To prove soundness of  $\llbracket \cdot \rrbracket^{\sharp}$, we need to verify that each rule in Figure \ref{figure1} as well as its colour-swapped and upside-down versions also hold under this interpretation. Since $\llbracket \cdot \rrbracket^{\sharp}$ enjoys `colour' and `upside-down' symmetries, we can only check the rules listed in Figure \ref{figure1} for soundness.
Firstly,  rules  $(S2),~(S3),~(B1),~(B2), ~(K1), ~(IV)$ and $(ZO)$ still hold under $\llbracket \cdot \rrbracket^{\sharp}$, since their interpretations are just triple copies of themselves. Secondly, the rule $(H)$ holds under $\llbracket \cdot \rrbracket^{\sharp}$, because the interpretation of red $\alpha$ is defined according to this rule. Finally, we check the rules $(S1),~(K2)$ and $(EU)$ in detail.

For $(S1)$, it suffices to prove
$$
\left\llbracket%
\beginpgfgraphicnamed{scalars//alpha-beta}
}
\endpgfgraphicnamed\right\rrbracket^{\sharp}=\left\llbracket%
\beginpgfgraphicnamed{scalars//alpha+beta}
}
\endpgfgraphicnamed\right\rrbracket^{\sharp}
and \left\llbracket%
\beginpgfgraphicnamed{scalars//leftalpha}
\InputIfFileExists{scalars//leftalpha.tikz}{}{\input{./figures/scalars//leftalpha.tikz}}
\endpgfgraphicnamed\right\rrbracket^{\sharp}=\left\llbracket%
\beginpgfgraphicnamed{scalars//rightalpha}
\InputIfFileExists{scalars//rightalpha.tikz}{}{\input{./figures/scalars//rightalpha.tikz}}
\endpgfgraphicnamed\right\rrbracket^{\sharp}
=\left\llbracket%
\beginpgfgraphicnamed{scalars//downalpha}
\InputIfFileExists{scalars//downalpha.tikz}{}{\input{./figures/scalars//downalpha.tikz}}
\endpgfgraphicnamed\right\rrbracket^{\sharp}.
$$

In fact,
\begin{align*}\begin{array}{lll}
\left\llbracket%
\beginpgfgraphicnamed{scalars//alpha-beta}
}
\endpgfgraphicnamed\right\rrbracket^{\sharp}=%
\beginpgfgraphicnamed{scalars//interprealpha+betasc}
\InputIfFileExists{scalars//interprealpha+betasc.tikz}{}{\input{./figures/scalars//interprealpha+betasc.tikz}}
\endpgfgraphicnamed&=%
\beginpgfgraphicnamed{scalars//interprealpha+beta2sc}
\InputIfFileExists{scalars//interprealpha+beta2sc.tikz}{}{\input{./figures/scalars//interprealpha+beta2sc.tikz}}
\endpgfgraphicnamed
&=%
\beginpgfgraphicnamed{scalars//interprealpha+beta3sc}
\InputIfFileExists{scalars//interprealpha+beta3sc.tikz}{}{\input{./figures/scalars//interprealpha+beta3sc.tikz}}
\endpgfgraphicnamed\\
=%
\beginpgfgraphicnamed{scalars//interprealpha+beta4sc}
\InputIfFileExists{scalars//interprealpha+beta4sc.tikz}{}{\input{./figures/scalars//interprealpha+beta4sc.tikz}}
\endpgfgraphicnamed&=%
\beginpgfgraphicnamed{scalars//interprealpha+beta5sc}
\InputIfFileExists{scalars//interprealpha+beta5sc.tikz}{}{\input{./figures/scalars//interprealpha+beta5sc.tikz}}
\endpgfgraphicnamed=\left\llbracket%
\beginpgfgraphicnamed{scalars//alpha+beta}
}
\endpgfgraphicnamed\right\rrbracket^{\sharp}&
 \end{array}
\end{align*}

\begin{align*}\begin{array}{lll}
\left\llbracket%
\beginpgfgraphicnamed{scalars//leftalpha}
\InputIfFileExists{scalars//leftalpha.tikz}{}{\input{./figures/scalars//leftalpha.tikz}}
\endpgfgraphicnamed\right\rrbracket^{\sharp}=%
\beginpgfgraphicnamed{scalars//interprealphalbranchsc}
\InputIfFileExists{scalars//interprealphalbranchsc.tikz}{}{\input{./figures/scalars//interprealphalbranchsc.tikz}}
\endpgfgraphicnamed&=%
\beginpgfgraphicnamed{scalars//interprealphalbranch2sc}
\InputIfFileExists{scalars//interprealphalbranch2sc.tikz}{}{\input{./figures/scalars//interprealphalbranch2sc.tikz}}
\endpgfgraphicnamed&\\
=%
\beginpgfgraphicnamed{scalars//interprealphartbranchsc}
\InputIfFileExists{scalars//interprealphartbranchsc.tikz}{}{\input{./figures/scalars//interprealphartbranchsc.tikz}}
\endpgfgraphicnamed=\left\llbracket%
\beginpgfgraphicnamed{scalars//rightalpha}
\InputIfFileExists{scalars//rightalpha.tikz}{}{\input{./figures/scalars//rightalpha.tikz}}
\endpgfgraphicnamed\right\rrbracket^{\sharp}&=%
\beginpgfgraphicnamed{scalars//interprealphadownbranchsc}
\InputIfFileExists{scalars//interprealphadownbranchsc.tikz}{}{\input{./figures/scalars//interprealphadownbranchsc.tikz}}
\endpgfgraphicnamed&\\
=\left\llbracket%
\beginpgfgraphicnamed{scalars//downalpha}
\InputIfFileExists{scalars//downalpha.tikz}{}{\input{./figures/scalars//downalpha.tikz}}
\endpgfgraphicnamed\right\rrbracket^{\sharp}&&
 \end{array}
\end{align*}

For $(K2)$, on the one hand,

\begin{align*}
\left\llbracket%
\beginpgfgraphicnamed{scalars//alphapisc}
\InputIfFileExists{scalars//alphapisc.tikz}{}{\input{./figures/scalars//alphapisc.tikz}}
\endpgfgraphicnamed\right\rrbracket^{\sharp}%
\beginpgfgraphicnamed{scalars//interprepialphapfl}
\InputIfFileExists{scalars//interprepialphapfl.tikz}{}{\input{./figures/scalars//interprepialphapfl.tikz}}
\endpgfgraphicnamed,
\end{align*}

on the other hand,
\begin{align*}
\left\llbracket%
\beginpgfgraphicnamed{scalars//piminusalphasc}
\InputIfFileExists{scalars//piminusalphasc.tikz}{}{\input{./figures/scalars//piminusalphasc.tikz}}
\endpgfgraphicnamed\right\rrbracket^{\sharp}%
\beginpgfgraphicnamed{scalars//interprepialphapfr}
\InputIfFileExists{scalars//interprepialphapfr.tikz}{}{\input{./figures/scalars//interprepialphapfr.tikz}}
\endpgfgraphicnamed.
\end{align*}

Therefore,
\begin{align*}
\left\llbracket%
\beginpgfgraphicnamed{scalars//alphapisc}
\InputIfFileExists{scalars//alphapisc.tikz}{}{\input{./figures/scalars//alphapisc.tikz}}
\endpgfgraphicnamed\right\rrbracket^{\sharp}=\left\llbracket%
\beginpgfgraphicnamed{scalars//piminusalphasc}
\InputIfFileExists{scalars//piminusalphasc.tikz}{}{\input{./figures/scalars//piminusalphasc.tikz}}
\endpgfgraphicnamed\right\rrbracket^{\sharp}.
\end{align*}

For $(EU)$, Firstly we have
\begin{align*}
\left\llbracket %
\beginpgfgraphicnamed{scalars//alphapiby2}
\begin{tikzpicture}
	\begin{pgfonlayer}{nodelayer}
		\node [style=none] (0) at (0.5, -0.75) {};
		\node [style=gn] (1) at (0.5, 0) {$\frac{\pi}{2}$};
		\node [style=none] (2) at (0.5, 0.75) {};
	\end{pgfonlayer}
	\begin{pgfonlayer}{edgelayer}
		\draw (2.center) to (1);
		\draw (1) to (0.center);
	\end{pgfonlayer}
\end{tikzpicture}}
\endpgfgraphicnamed\right\rrbracket^{\sharp}=%
\beginpgfgraphicnamed{scalars//interprepiby2sc}
\InputIfFileExists{scalars//interprepiby2sc.tikz}{}{\input{./figures/scalars//interprepiby2sc.tikz}}
\endpgfgraphicnamed=%
\beginpgfgraphicnamed{scalars//3piby2}
\InputIfFileExists{scalars//3piby2.tikz}{}{\input{./figures/scalars//3piby2.tikz}}
\endpgfgraphicnamed\quad\quad
\left\llbracket %
\beginpgfgraphicnamed{scalars//alphapiby2red}
\begin{tikzpicture}
	\begin{pgfonlayer}{nodelayer}
		\node [style=none] (0) at (0.5, 0.75) {};
		\node [style=none] (1) at (0.5, -0.75) {};
		\node [style=rn] (2) at (0.5, 0) {$\frac{\pi}{2}$};
	\end{pgfonlayer}
	\begin{pgfonlayer}{edgelayer}
		\draw (0.center) to (2);
		\draw (2) to (1.center);
	\end{pgfonlayer}
\end{tikzpicture}}
\endpgfgraphicnamed\right\rrbracket^{\sharp}=%
\beginpgfgraphicnamed{scalars//interprepiby2redsc}
\InputIfFileExists{scalars//interprepiby2redsc.tikz}{}{\input{./figures/scalars//interprepiby2redsc.tikz}}
\endpgfgraphicnamed=%
\beginpgfgraphicnamed{scalars//3piby2red}
\InputIfFileExists{scalars//3piby2red.tikz}{}{\input{./figures/scalars//3piby2red.tikz}}
\endpgfgraphicnamed.
\end{align*}

Then it follows from Lemma \ref{2hdecom} that
\begin{align*}
\left\llbracket %
\beginpgfgraphicnamed{scalars//grgpiby2sc}
\InputIfFileExists{scalars//grgpiby2sc.tikz}{}{\input{./figures/scalars//grgpiby2sc.tikz}}
\endpgfgraphicnamed\right\rrbracket^{\sharp}=%
\beginpgfgraphicnamed{scalars//eusoundpf}
\InputIfFileExists{scalars//eusoundpf.tikz}{}{\input{./figures/scalars//eusoundpf.tikz}}
\endpgfgraphicnamed
=\left\llbracket %
\beginpgfgraphicnamed{scalars//Had2sc}
\InputIfFileExists{scalars//Had2sc.tikz}{}{\input{./figures/scalars//Had2sc.tikz}}
\endpgfgraphicnamed\right\rrbracket^{\sharp}
\end{align*}

To sum up, the interpretation $\llbracket \cdot \rrbracket^{\sharp}$ is sound.

\end{proof}

\begin{lemma}\label{suplint011}
In the ZX-calculus,
\begin{align*}
\hspace{1.5cm} %
\beginpgfgraphicnamed{scalars//011}
\InputIfFileExists{scalars//011.tikz}{}{\input{./figures/scalars//011.tikz}}
\endpgfgraphicnamed\circ\left\llbracket %
\beginpgfgraphicnamed{scalars//suppplenewleft}
\InputIfFileExists{scalars//suppplenewleft.tikz}{}{\input{./figures/scalars//suppplenewleft.tikz}}
\endpgfgraphicnamed\right\rrbracket^{\sharp}=%
\beginpgfgraphicnamed{scalars//inptriplesupl011simpli}
\InputIfFileExists{scalars//inptriplesupl011simpli.tikz}{}{\input{./figures/scalars//inptriplesupl011simpli.tikz}}
\endpgfgraphicnamed.
\end{align*}
\end{lemma}

\begin{proof}
\[
\begin{array}{ll}
\beginpgfgraphicnamed{scalars//011}
\InputIfFileExists{scalars//011.tikz}{}{\input{./figures/scalars//011.tikz}}
\endpgfgraphicnamed\circ\left\llbracket %
\beginpgfgraphicnamed{scalars//suppplenewleft}
\InputIfFileExists{scalars//suppplenewleft.tikz}{}{\input{./figures/scalars//suppplenewleft.tikz}}
\endpgfgraphicnamed\right\rrbracket^{\sharp}=%
\beginpgfgraphicnamed{scalars//inptriplesupl011}
\InputIfFileExists{scalars//inptriplesupl011.tikz}{}{\input{./figures/scalars//inptriplesupl011.tikz}}
\endpgfgraphicnamed&\vspace{0.5cm}\\
=%
\beginpgfgraphicnamed{scalars//suplint011prf2}
\InputIfFileExists{scalars//suplint011prf2.tikz}{}{\input{./figures/scalars//suplint011prf2.tikz}}
\endpgfgraphicnamed=%
\beginpgfgraphicnamed{scalars//suplint011prf3}
\InputIfFileExists{scalars//suplint011prf3.tikz}{}{\input{./figures/scalars//suplint011prf3.tikz}}
\endpgfgraphicnamed&\vspace{0.5cm}\\
=%
\beginpgfgraphicnamed{scalars//suplint011prf4}
\InputIfFileExists{scalars//suplint011prf4.tikz}{}{\input{./figures/scalars//suplint011prf4.tikz}}
\endpgfgraphicnamed=%
\beginpgfgraphicnamed{scalars//suplint011prf5}
\InputIfFileExists{scalars//suplint011prf5.tikz}{}{\input{./figures/scalars//suplint011prf5.tikz}}
\endpgfgraphicnamed=%
\beginpgfgraphicnamed{scalars//suplint011prf6}
\InputIfFileExists{scalars//suplint011prf6.tikz}{}{\input{./figures/scalars//suplint011prf6.tikz}}
\endpgfgraphicnamed&\vspace{0.5cm}\\
=%
\beginpgfgraphicnamed{scalars//suplint011prf7}
\InputIfFileExists{scalars//suplint011prf7.tikz}{}{\input{./figures/scalars//suplint011prf7.tikz}}
\endpgfgraphicnamed=%
\beginpgfgraphicnamed{scalars//inptriplesupl011simpli}
\InputIfFileExists{scalars//inptriplesupl011simpli.tikz}{}{\input{./figures/scalars//inptriplesupl011simpli.tikz}}
\endpgfgraphicnamed.
\end{array}
\]
Here the spider rule (S1), $\pi$-copy rule (K1), $\pi$-commutation rule (K2) and the Hopf law are used.
\end{proof}

\begin{lemma}\label{2a4apiispi}
In the ZX-calculus,
\begin{align*}
\beginpgfgraphicnamed{scalars//inptriplesupl011simpli}
\InputIfFileExists{scalars//inptriplesupl011simpli.tikz}{}{\input{./figures/scalars//inptriplesupl011simpli.tikz}}
\endpgfgraphicnamed=%
\beginpgfgraphicnamed{scalars//supppleright011simpli}
}
\endpgfgraphicnamed\quad\quad  \Rightarrow \quad\quad    %
\beginpgfgraphicnamed{scalars//2a4apiispi}
\begin{tikzpicture}
	\begin{pgfonlayer}{nodelayer}
		\node [style=gn] (0) at (-1, -0) {$2\alpha$};
		\node [style=gn] (1) at (-0.25, -0) {\scriptsize $4\alpha{+}\pi$};
		\node [style=rn] (2) at (1, -0) {$\pi$};
		\node [style=none] (3) at (0.5, -0) {$=$};
	\end{pgfonlayer}
\end{tikzpicture}}
\endpgfgraphicnamed.
\end{align*}
\end{lemma}

\begin{proof}
\begin{align*}
\beginpgfgraphicnamed{scalars//2a4apiispiproof}
\InputIfFileExists{scalars//2a4apiispiproof.tikz}{}{\input{./figures/scalars//2a4apiispiproof.tikz}}
\endpgfgraphicnamed.
\end{align*}
Here we used the $\pi$-copy rule (K1), inverse rule (IV) and the fact that $%
\beginpgfgraphicnamed{scalars//supppleright011simpli}
}
\endpgfgraphicnamed$ can absorb any scalar.

\end{proof}

\ed

%\ed

%
%That is,
%
%\[
%\begin{array}{ll} \tikzfig{scalars//inptriplesupl011} &=\vspace{0.5cm}\\
%=2\sqrt{2}ie^{i5\alpha}[(\cos2\alpha\sin^3\alpha-\sin2\alpha\cos^3\alpha)\ket{000}-\sin2\alpha\cos\alpha(\ket{011}+\ket{101}+\ket{110})],&
%\end{array}
%\]
%
%
%
%
%
%\[
%\begin{array}{ll}
%\left\llbracket\left\llbracket \tikzfig{scalars//suppplenewright}\right\rrbracket^{\sharp}\right\rrbracket=\left\llbracket\left\llbracket \tikzfig{scalars//xdotscnew2alpha}\right\rrbracket^{\sharp}\right\rrbracket=
%\left\llbracket\tikzfig{scalars//inpxdotscnewalpha}\right\rrbracket&\vspace{0.5cm}\\
%=2\sqrt{2}ie^{i5\alpha}(\cos2\alpha\sin^3\alpha-\sin2\alpha\cos^3\alpha)\ket{000}.&
%\end{array}
%\]
%It is clear that equation (\ref{ct2})  holds if and only if $\sin2\alpha\cos\alpha= 0$, i.e., $\alpha\equiv 0 \bmod \frac{\pi}{2}$. Therefore, equation (\ref{ct}) cannot be derived in the ZX calculus except for $\alpha\equiv 0 \bmod \frac{\pi}{2}$.

%\ed

corrolaty

ZX is incomplete.

%
%
%
%\centerline{\tikzfig{scalars//prop0} }
%
%\noindent In other words:
%
%
%\begin{lemma}[Supplementarity]
%For $\alpha \neq 0\bmod \pi$,
%
%\centerline{$
%\left\llbracket\tikzfig{scalars//prop1sc}\right\rrbracket=\left\llbracket\tikzfig{scalars//prop2sc}\right\rrbracket
%~~~\text{and}~~~ \left\llbracket\tikzfig{scalars//prop5sc}\right\rrbracket=\left\llbracket\tikzfig{scalars//prop4sc}\right\rrbracket.
%$}
%\end{lemma}
%
%These equations can be proven in the ZX calculus if and only if the following equation holds, as shown by Lemma \ref{lem:sup}.
%\begin{equation}\label{eqn:supp}
%\hspace{4.5cm}\tikzfig{scalars//suppplenew}
%\end{equation}
%
%
%
%\begin{lemma}\label{lem:sup} In the ZX calculus, for any $\alpha$:
%\begin{equation*}\label{equiv}
%\tikzfig{scalars//suppplenew}~~
%\Leftrightarrow ~~\tikzfig{scalars//countgen2scnew}~~
%\Leftrightarrow ~~
%\tikzfig{scalars//supplepj1}~~
%\Leftrightarrow ~~\tikzfig{scalars//supplepj0}
%\end{equation*}
%
%\end{lemma}
%
%
%
%
%
%
%
%
%The proof of Lemma \ref{lem:sup} is given in appendix.

\subsection{Incompleteness result}

\begin{theorem}[Incompleteness]\label{th1} The following equation cannot be derived in the ZX calculus
\begin{equation}\label{ct}
 \hspace{4.5cm}%
\beginpgfgraphicnamed{scalars//countexamplescalar}
\InputIfFileExists{scalars//countexamplescalar.tikz}{}{\input{./figures/scalars//countexamplescalar.tikz}}
\endpgfgraphicnamed
\end{equation}
%cannot be derived in the ZX calculus.
\end{theorem}

\noindent \begin{proof}
We prove this theorem by contradiction. Assume that equation (\ref{ct}) can be derived in the ZX calculus. Since both  $\llbracket \cdot \rrbracket$ and $\llbracket \cdot \rrbracket^{\sharp}$ provide  sound models of the calculus, there must hold the following equation:
\begin{equation}\label{ct2}
\left\llbracket \left\llbracket %
\beginpgfgraphicnamed{scalars//dmultiplysc}
\InputIfFileExists{scalars//dmultiplysc.tikz}{}{\input{./figures/scalars//dmultiplysc.tikz}}
\endpgfgraphicnamed\right\rrbracket^{\sharp}\right\rrbracket=\left\llbracket\left\llbracket%
\beginpgfgraphicnamed{scalars//xdotsc}
\begin{tikzpicture}
	\begin{pgfonlayer}{nodelayer}
		\node [style=none] (0) at (0, -0.5) {};
		\node [style=gn] (1) at (-0.5, -0) {$\frac{-\pi}{2}$};
		\node [style=rn] (2) at (0, 0.25) {};
	\end{pgfonlayer}
	\begin{pgfonlayer}{edgelayer}
		\draw (2) to (0.center);
	\end{pgfonlayer}
\end{tikzpicture}}
\endpgfgraphicnamed\right\rrbracket^{\sharp}\right\rrbracket
\end{equation}
By direct calculation, we have

\[
\begin{array}{ll}
\left\llbracket \left\llbracket %
\beginpgfgraphicnamed{scalars//dmultiplysc}
\InputIfFileExists{scalars//dmultiplysc.tikz}{}{\input{./figures/scalars//dmultiplysc.tikz}}
\endpgfgraphicnamed\right\rrbracket^{\sharp}\right\rrbracket=&\left\llbracket %
\beginpgfgraphicnamed{scalars//inptriplesc}
\InputIfFileExists{scalars//inptriplesc.tikz}{}{\input{./figures/scalars//inptriplesc.tikz}}
\endpgfgraphicnamed\right\rrbracket\\
%\circ(U(\frac{\pi}{4})\otimes U(\frac{\pi}{4}))\circ[(\ket{0}+e^{i\frac{\pi}{4}}\ket{1})^{\otimes3})\otimes(\ket{0}+e^{i\frac{-3\pi}{4}}\ket{1})^{\otimes3})]
&=4\sqrt{2}(-1+i)(\ket{000}+2\ket{011}+2\ket{101}+2\ket{110})
\end{array}
\]

\begin{equation*}
\left\llbracket\left\llbracket %
\beginpgfgraphicnamed{scalars//xdotsc}
}
\endpgfgraphicnamed\right\rrbracket^{\sharp}\right\rrbracket=\left\llbracket%
\beginpgfgraphicnamed{scalars//inpxdotsc}
\InputIfFileExists{scalars//inpxdotsc.tikz}{}{\input{./figures/scalars//inpxdotsc.tikz}}
\endpgfgraphicnamed\right\rrbracket
=2(-1+i)\ket{000}.
\end{equation*}
Clearly, equation (\ref{ct2}) does not hold. Therefore, equation (\ref{ct}) cannot be derived in the ZX calculus.
\end{proof}

\begin{corollary}\label{cr1}
The ZX Calculus is incomplete for Clifford + $T$  quantum mechanics.
\end{corollary}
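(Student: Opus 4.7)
My plan is to obtain this corollary as an immediate consequence of the preceding Theorem~\ref{thm:incomp}: it exhibits a pair of diagrams $D_1, D_2$ built only from Clifford+T generators, i.e.\ angles that are multiples of $\pi/4$, with $\denote{D_1} = \denote{D_2}$ yet $ZX \not\vdash D_1 = D_2$. By the definition of completeness, this is precisely incompleteness for the $\frac{\pi}{4}$-fragment, which is Clifford+T quantum mechanics. The substantive content therefore lies in proving Theorem~\ref{thm:incomp}.

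For that theorem, my approach will be to introduce an alternative semantic interpretation $\den{\cdot}$ on ZX-diagrams. The idea is that $\den{D}$ consists of three disjoint copies of $D$, glued together at each dot by a small ``gadget'' parameterized by twice the original angle. When $\alpha$ is a multiple of $\pi$ the gadget degenerates and $\den{D}$ collapses to a disjoint union of copies; it is only for non-multiples of $\pi/2$ that the gadget can retain distinguishing information, which is exactly what we want to exploit on supplementarity.

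The core technical task is soundness: $ZX \vdash D_1 = D_2$ must imply $ZX \vdash \den{D_1} = \den{D_2}$. I will verify this rule by rule from Figure~\ref{figure1}. Most rules are essentially trivial: their tripled versions are instances of the same rule, with the untouched gadgets commuting past the reductions. The delicate cases are (S1) (spider fusion genuinely combines angles), (K2) ($\pi$-commutation flips the sign of $\alpha$), (H) (colour change), and above all (EU) (Euler decomposition of $H$). For (EU) I expect to need a careful decomposition of $H^{\otimes 2}$ together with a generalised bialgebra equation to reconcile the three interlinked copies while keeping the scalars in order.

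Once soundness is secured, I will apply $\den{\cdot}$ to both sides of the supplementarity equation, post-compose with a well-chosen state (something like $\ket{011}$) to collapse the diagram, simplify using the ordinary ZX rules, and finally read off the standard interpretation of each side. The expected outcome is a scalar identity of the form $(1 + e^{2i\alpha})(1 - e^{4i\alpha}) = 0$, which forces $\alpha \equiv 0 \pmod{\pi/2}$. Combined with the easy converse from stabilizer completeness of the $\frac{\pi}{2}$-fragment, this proves the theorem and hence the corollary. The main obstacle will be the soundness check under (EU), where the scalar bookkeeping is most delicate and where a naive gadget construction would fail to be sound.
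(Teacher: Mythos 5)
Your proposal is correct and follows essentially the same route as the paper: the corollary falls out of Theorem~\ref{thm:incomp}, which the paper likewise proves by constructing the sound interpretation $\den{\cdot}$ (three copies of the diagram glued at each dot by a $2\alpha$-gadget, Lemma~\ref{lem:sound}), composing the interpreted supplementarity equation with the $011$ state, and reducing to the scalar identity $(1+e^{2i\alpha})(1-e^{4i\alpha})=0$, with the converse direction supplied by completeness of the $\frac{\pi}{2}$-fragment. Even the difficulty you anticipate in the (EU) soundness check matches the paper, which handles it via the Hadamard decomposition of Lemma~\ref{2hdecom}.
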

\begin{proof}
On the one hand, we have
\begin{equation*}
\left\llbracket %
\beginpgfgraphicnamed{scalars//dmultiplysc}
\InputIfFileExists{scalars//dmultiplysc.tikz}{}{\input{./figures/scalars//dmultiplysc.tikz}}
\endpgfgraphicnamed\right\rrbracket=\left\llbracket%
\beginpgfgraphicnamed{scalars//xdotsc}
}
\endpgfgraphicnamed\right\rrbracket=\sqrt{2}(1-i)\ket{0}.
\end{equation*}
On the other hand, it follows from Theorem \ref{th1} that equation (\ref{ct}) cannot be derived in the ZX calculus. This means the ZX Calculus is incomplete for Clifford + $T$  quantum mechanics.

\end{proof}

If we use the normalization interpretation, than we get the proof of incompleteness in usual sense.

\section{Removal of zero-rule}

Finally, we augment the ZX-calculus+supplementarity with the empty rule. We prove step by step that the zero rule can be derived from the supplementarity rule and the empty rule.

\begin{lemma}\label{zerorule1}
\begin{equation}\label{zerorule1eq}
\beginpgfgraphicnamed{scalars//zerorule1}
\InputIfFileExists{scalars//zerorule1.tikz}{}{\input{./figures/scalars//zerorule1.tikz}}
\endpgfgraphicnamed.
\end{equation}
\end{lemma}

\begin{proof}
Let $\alpha=0$ in the supplementary rule, we have
\begin{align*}
\beginpgfgraphicnamed{scalars//0insupple}
\InputIfFileExists{scalars//0insupple.tikz}{}{\input{./figures/scalars//0insupple.tikz}}
\endpgfgraphicnamed.
\end{align*}
Applying the copy rule (B1) to this, we obtain
\begin{equation}\label{insup}
\beginpgfgraphicnamed{scalars//0insupple2}
\InputIfFileExists{scalars//0insupple2.tikz}{}{\input{./figures/scalars//0insupple2.tikz}}
\endpgfgraphicnamed.
\end{equation}
Composing with $%
\beginpgfgraphicnamed{scalars//redcopy}
\begin{tikzpicture}
	\begin{pgfonlayer}{nodelayer}
		\node [style=none] (0) at (-0.25, 0.25) {};
		\node [style=none] (1) at (0.25, 0.25) {};
		\node [style=rn] (2) at (0, -0) {};
		\node [style=none] (3) at (0, -0.25) {};
	\end{pgfonlayer}
	\begin{pgfonlayer}{edgelayer}
		\draw (0.center) to (2);
		\draw (1.center) to (2);
		\draw (2) to (3.center);
	\end{pgfonlayer}
\end{tikzpicture}}
\endpgfgraphicnamed$ on both sides of (\ref{insup}), it follows that
\begin{align*}
\beginpgfgraphicnamed{scalars//0insupple3}
\InputIfFileExists{scalars//0insupple3.tikz}{}{\input{./figures/scalars//0insupple3.tikz}}
\endpgfgraphicnamed.
\end{align*}
That is,
\begin{align*}
\beginpgfgraphicnamed{scalars//zerorule1}
\InputIfFileExists{scalars//zerorule1.tikz}{}{\input{./figures/scalars//zerorule1.tikz}}
\endpgfgraphicnamed.
\end{align*}
\end{proof}

\begin{corollary}\label{dotabsorb}

\begin{align*}
\beginpgfgraphicnamed{scalars//dotabsorb}
\begin{tikzpicture}
	\begin{pgfonlayer}{nodelayer}
		\node [style=gn] (0) at (1, -0) {};
		\node [style=gn] (1) at (0.5, 0) {$\pi$};
		\node [style=gn] (2) at (2, -0) {$\pi$};
		\node [style=none] (3) at (1.5, -0) {$=$};
	\end{pgfonlayer}
\end{tikzpicture}}
\endpgfgraphicnamed.
\end{align*}
\end{corollary}

\begin{proof}
On both sides of (\ref{zerorule1eq}), composing with $%
\beginpgfgraphicnamed{scalars//gndown}
\begin{tikzpicture}
	\begin{pgfonlayer}{nodelayer}
		\node [style=none] (0) at (0, -0.25) {};
		\node [style=gn] (1) at (0, 0.25) {};
	\end{pgfonlayer}
	\begin{pgfonlayer}{edgelayer}
		\draw (1) to (0.center);
	\end{pgfonlayer}
\end{tikzpicture}}
\endpgfgraphicnamed$ at the top and $%
\beginpgfgraphicnamed{scalars//gnup}
\begin{tikzpicture}
	\begin{pgfonlayer}{nodelayer}
		\node [style=none] (0) at (0, 0.25) {};
		\node [style=gn] (1) at (0, -0.25) {};
	\end{pgfonlayer}
	\begin{pgfonlayer}{edgelayer}
		\draw (1) to (0.center);
	\end{pgfonlayer}
\end{tikzpicture}}
\endpgfgraphicnamed$ at the bottom, one has
\begin{align*}
\beginpgfgraphicnamed{scalars//dotabsorbpf1}
\InputIfFileExists{scalars//dotabsorbpf1.tikz}{}{\input{./figures/scalars//dotabsorbpf1.tikz}}
\endpgfgraphicnamed.
\end{align*}
Since $%
\beginpgfgraphicnamed{scalars//2gns}
\begin{tikzpicture}
	\begin{pgfonlayer}{nodelayer}
		\node [style=gn] (0) at (0, -0.25) {};
		\node [style=gn] (1) at (0, 0.25) {};
	\end{pgfonlayer}
	\begin{pgfonlayer}{edgelayer}
		\draw (1) to (0);
	\end{pgfonlayer}
\end{tikzpicture}}
\endpgfgraphicnamed$ has an inverse, we have
\begin{align*}
\beginpgfgraphicnamed{scalars//dotabsorb}
}
\endpgfgraphicnamed.
\end{align*}

\end{proof}

\begin{lemma}\label{piabsorbalpha}
\begin{align*}
\beginpgfgraphicnamed{scalars//piabsorbalpha}
\begin{tikzpicture}
	\begin{pgfonlayer}{nodelayer}
		\node [style=gn] (0) at (2, 0) {$\pi$};
		\node [style=gn] (1) at (0.5, 0) {$\pi$};
		\node [style=none] (2) at (1.5, 0) {$=$};
		\node [style=gn] (3) at (1, 0) {$\alpha$};
	\end{pgfonlayer}
\end{tikzpicture}}
\endpgfgraphicnamed.
\end{align*}
\end{lemma}

\begin{proof}
Adding the scalar $%
\beginpgfgraphicnamed{scalars//1line}
}
\endpgfgraphicnamed$ to each side of the equation (\ref{zerorule1eq}), by equation (\ref{insup}) we have
\begin{equation}\label{piabsorbalphapf1}
\beginpgfgraphicnamed{scalars//piabsorbalphapf1}
\InputIfFileExists{scalars//piabsorbalphapf1.tikz}{}{\input{./figures/scalars//piabsorbalphapf1.tikz}}
\endpgfgraphicnamed.
\end{equation}
On both sides of the second equality of (\ref{piabsorbalphapf1}), composing with $%
\beginpgfgraphicnamed{scalars//gndown}
}
\endpgfgraphicnamed$ at the top and $%
\beginpgfgraphicnamed{scalars//alphaup}
\begin{tikzpicture}
	\begin{pgfonlayer}{nodelayer}
		\node [style=none] (0) at (0.5, 0.25) {};
		\node [style=gn] (1) at (0.5, -0.25) {$\alpha$};
	\end{pgfonlayer}
	\begin{pgfonlayer}{edgelayer}
		\draw (1) to (0.center);
	\end{pgfonlayer}
\end{tikzpicture}}
\endpgfgraphicnamed$ at the bottom,
we have
\begin{align*}
\beginpgfgraphicnamed{scalars//piabsorbalphapf2}
\InputIfFileExists{scalars//piabsorbalphapf2.tikz}{}{\input{./figures/scalars//piabsorbalphapf2.tikz}}
\endpgfgraphicnamed.
\end{align*}
By lemma \ref{alphadelete}, that is,
\begin{align*}
\beginpgfgraphicnamed{scalars//piabsorbalphapf3}
\InputIfFileExists{scalars//piabsorbalphapf3.tikz}{}{\input{./figures/scalars//piabsorbalphapf3.tikz}}
\endpgfgraphicnamed.
\end{align*}
Since $%
\beginpgfgraphicnamed{scalars//1line}
}
\endpgfgraphicnamed$ has an inverse, by corollary \ref{dotabsorb}, we have
\begin{align*}
\beginpgfgraphicnamed{scalars//piabsorbalpha}
}
\endpgfgraphicnamed.
\end{align*}
\end{proof}

\begin{lemma}\label{piroot2}
\begin{align*}
\beginpgfgraphicnamed{scalars//piroot2}
\begin{tikzpicture}
	\begin{pgfonlayer}{nodelayer}
		\node [style=gn] (0) at (0, -0) {$\pi$};
		\node [style=rn] (1) at (0.5, 0.25) {};
		\node [style=gn] (2) at (0.5, -0.25) {};
		\node [style=none] (3) at (1, -0) {$=$};
		\node [style=gn] (4) at (1.5, -0) {$\pi$};
	\end{pgfonlayer}
	\begin{pgfonlayer}{edgelayer}
		\draw (1) to (2);
	\end{pgfonlayer}
\end{tikzpicture}}
\endpgfgraphicnamed.
\end{align*}
\end{lemma}

\begin{proof}
Using the empty rule, lemma \ref{zerorule1}, corollary \ref{dotabsorb}, and  lemma \ref{piabsorbalpha}, we have
\begin{align*}
\beginpgfgraphicnamed{scalars//piroot2pr0}
\begin{tikzpicture}
	\begin{pgfonlayer}{nodelayer}
		\node [style=gn] (0) at (0.5, 0) {$\pi$};
		\node [style=gn] (1) at (1, -0.25) {};
		\node [style=rn] (2) at (1, 0.25) {};
	\end{pgfonlayer}
	\begin{pgfonlayer}{edgelayer}
		\draw (2) to (1);
	\end{pgfonlayer}
\end{tikzpicture}}
\endpgfgraphicnamed=%
\beginpgfgraphicnamed{scalars//piroot2pr1}
\InputIfFileExists{scalars//piroot2pr1.tikz}{}{\input{./figures/scalars//piroot2pr1.tikz}}
\endpgfgraphicnamed=%
\beginpgfgraphicnamed{scalars//piroot2pr2}
\InputIfFileExists{scalars//piroot2pr2.tikz}{}{\input{./figures/scalars//piroot2pr2.tikz}}
\endpgfgraphicnamed=%
\beginpgfgraphicnamed{scalars//piroot2pr3}
\InputIfFileExists{scalars//piroot2pr3.tikz}{}{\input{./figures/scalars//piroot2pr3.tikz}}
\endpgfgraphicnamed
=%
\beginpgfgraphicnamed{scalars//piroot2pr4}
\begin{tikzpicture}
	\begin{pgfonlayer}{nodelayer}
		\node [style=gn] (0) at (1, -0) {};
		\node [style=gn] (1) at (0.5, 0) {$\pi$};
	\end{pgfonlayer}
\end{tikzpicture}}
\endpgfgraphicnamed=%
\beginpgfgraphicnamed{scalars//gnpi}
\begin{tikzpicture}
	\begin{pgfonlayer}{nodelayer}
		\node [style=gn] (0) at (0, -0) {$\pi$};
	\end{pgfonlayer}
\end{tikzpicture}}
\endpgfgraphicnamed.
\end{align*}
\end{proof}

\begin{theorem}\label{zerorule2}
\begin{align*}
\beginpgfgraphicnamed{scalars//zerorule2}
\InputIfFileExists{scalars//zerorule2.tikz}{}{\input{./figures/scalars//zerorule2.tikz}}
\endpgfgraphicnamed.
\end{align*}
\end{theorem}

\begin{proof}
Using lemma \ref{piroot2}, we have directly from  the second equality of (\ref{piabsorbalphapf1}) that
\begin{align*}
\beginpgfgraphicnamed{scalars//zerorule2}
\InputIfFileExists{scalars//zerorule2.tikz}{}{\input{./figures/scalars//zerorule2.tikz}}
\endpgfgraphicnamed.
\end{align*}

\end{proof}

{\bf Open Question:}

%Is the following equation derivable in the ZX-calculus+supplementarity+empty-rule?

Is ZX-calculus + supplementarity + empty-rule complete for $\frac{\pi}{4}$ fragment QM?

\section{Conclusion and further work}
Clifford+T  quantum mechanics is a key fragment of pure qubit quantum mechanics: it has infinite operations on a finite number of qubits and allows any unitary operation to be approximated to arbitrary accuracy. As a powerful graphical language for quantum processes, it is important to know whether the ZX-calculus is complete for Clifford+T quantum mechanics. In this paper, we show that a graphical equation in the ZX-calculus for Clifford + T  quantum mechanics cannot be derived from all the given rewrite rules, while the standard interpretations of both sides of the equation are the same up to a global scalar.  Therefore, we proved the incompleteness of  ZX-calculus for Clifford+T  quantum mechanics. The proof relies on a new non-standard interpretation of the ZX calculus.

We propose to augment the ZX-calculus with the new axiom of supplementarity. The obvious next step is to prove the completeness or incompleteness of this new language for Clifford+T quantum mechanics.

\subparagraph*{Acknowledgements}

We want to thank Miriam Backens, Bob Coecke, Ross Duncan, Aleks Kissinger, Emmanuel Jeandel, and Vladimir Zamdzhiev for fruitful and stimulating discussions.

\appendix
\section{Proof of Lemma \ref{lem:sup}}
\emph{Proof of Lemma \ref{lem:sup}}.

We prove the equivalences in the following order.
\begin{displaymath}
\begin{array}{ccccc}
\beginpgfgraphicnamed{scalars//suppplenew}
\InputIfFileExists{scalars//suppplenew.tikz}{}{\input{./figures/scalars//suppplenew.tikz}}
\endpgfgraphicnamed
&\Rightarrow&% \tikzfig{scalars//countgen1}~\quad\Rightarrow~~~
\beginpgfgraphicnamed{scalars//countgen2scnew}
\InputIfFileExists{scalars//countgen2scnew.tikz}{}{\input{./figures/scalars//countgen2scnew.tikz}}
\endpgfgraphicnamed
&\Rightarrow&%
\beginpgfgraphicnamed{scalars//supplepj1}
\InputIfFileExists{scalars//supplepj1.tikz}{}{\input{./figures/scalars//supplepj1.tikz}}
\endpgfgraphicnamed\\
~~~\Rightarrow~~~%
\beginpgfgraphicnamed{scalars//supplepj0}
\InputIfFileExists{scalars//supplepj0.tikz}{}{\input{./figures/scalars//supplepj0.tikz}}
\endpgfgraphicnamed&\Rightarrow& %
\beginpgfgraphicnamed{scalars//suppplenew}
\InputIfFileExists{scalars//suppplenew.tikz}{}{\input{./figures/scalars//suppplenew.tikz}}
\endpgfgraphicnamed&&
%\tikzfig{scalars//prop1}=\tikzfig{scalars//prop2}~~\Rightarrow~~ \tikzfig{scalars//prop3}=\tikzfig{scalars//prop4}
\end{array}
\end{displaymath}

For
\begin{align*}
\beginpgfgraphicnamed{scalars//suppplenew}
\InputIfFileExists{scalars//suppplenew.tikz}{}{\input{./figures/scalars//suppplenew.tikz}}
\endpgfgraphicnamed~~~\Rightarrow~~~%
\beginpgfgraphicnamed{scalars//countgen2scnew}
\InputIfFileExists{scalars//countgen2scnew.tikz}{}{\input{./figures/scalars//countgen2scnew.tikz}}
\endpgfgraphicnamed
\end{align*}
we have
\begin{align*}
%\tikzfig{scalars//lmpf4}=\tikzfig{scalars//lmpf4pi}=\tikzfig{scalars//lmpf4pi2}=\tikzfig{scalars//xdot2pi}=\tikzfig{scalars//xdot}
%
\beginpgfgraphicnamed{scalars//equivproof1}
\InputIfFileExists{scalars//equivproof1.tikz}{}{\input{./figures/scalars//equivproof1.tikz}}
\endpgfgraphicnamed
\end{align*}
where the commutation rule is used.

For
\begin{align*}
\beginpgfgraphicnamed{scalars//countgen2scnew}
\InputIfFileExists{scalars//countgen2scnew.tikz}{}{\input{./figures/scalars//countgen2scnew.tikz}}
\endpgfgraphicnamed~~\Rightarrow ~~%
\beginpgfgraphicnamed{scalars//supplepj1}
\InputIfFileExists{scalars//supplepj1.tikz}{}{\input{./figures/scalars//supplepj1.tikz}}
\endpgfgraphicnamed
\end{align*}
we have
\begin{align*}
%\tikzfig{scalars//prop1}=\tikzfig{scalars//lmpf1}=\tikzfig{scalars//lmpf2}=\tikzfig{scalars//lmpf3}=\tikzfig{scalars//prop2}
%
\beginpgfgraphicnamed{scalars//equivproof2}
\InputIfFileExists{scalars//equivproof2.tikz}{}{\input{./figures/scalars//equivproof2.tikz}}
\endpgfgraphicnamed
\end{align*}
where the bialgebra rule, the Hopf law, the $\pi$-copy rule, the inverse rule and the colour change rule are used.

For
\begin{align*}
\beginpgfgraphicnamed{scalars//supplepj1}
\InputIfFileExists{scalars//supplepj1.tikz}{}{\input{./figures/scalars//supplepj1.tikz}}
\endpgfgraphicnamed~~\Rightarrow ~~%
\beginpgfgraphicnamed{scalars//supplepj0}
\InputIfFileExists{scalars//supplepj0.tikz}{}{\input{./figures/scalars//supplepj0.tikz}}
\endpgfgraphicnamed
\end{align*}
we have
\begin{align*}
%\tikzfig{scalars//prop3}=\tikzfig{scalars//prop12pi}=\tikzfig{scalars//prop12pi2}=\tikzfig{scalars//prop4}
%
\beginpgfgraphicnamed{scalars//equivproof3}
\InputIfFileExists{scalars//equivproof3.tikz}{}{\input{./figures/scalars//equivproof3.tikz}}
\endpgfgraphicnamed
\end{align*}
where  the $\pi$-copy rule, the inverse rule and the colour change rule are used.

For
\begin{align*}
%\tikzfig{scalars//prop3}=\tikzfig{scalars//prop4} \Rightarrow \tikzfig{scalars//countgen2}
%
\beginpgfgraphicnamed{scalars//supplepj0}
\InputIfFileExists{scalars//supplepj0.tikz}{}{\input{./figures/scalars//supplepj0.tikz}}
\endpgfgraphicnamed~~\Rightarrow ~~ %
\beginpgfgraphicnamed{scalars//suppplenew}
\InputIfFileExists{scalars//suppplenew.tikz}{}{\input{./figures/scalars//suppplenew.tikz}}
\endpgfgraphicnamed
\end{align*}
we have
\begin{align*}
 %\tikzfig{scalars//countgen2half}=\tikzfig{scalars//lmpf52}=\tikzfig{scalars//xdotpi}
 %
\beginpgfgraphicnamed{scalars//equivproof4}
\InputIfFileExists{scalars//equivproof4.tikz}{}{\input{./figures/scalars//equivproof4.tikz}}
\endpgfgraphicnamed
\end{align*}
where  the $\pi$-copy rule, the inverse rule and  the Hopf law are used. \hfill $\Box$

%%
%% Bibliography
%%

%% Either use bibtex (recommended), but commented out in this sample

%\bibliography{dummybib}

%% .. or use bibitems explicitely

%\nocite{Simpson}

\end{document}